\documentclass[11pt,a4paper]{article}

\usepackage[T1]{fontenc}
\usepackage[T1,hyphens]{url}
\usepackage{amsfonts, amsmath, amssymb}  
\usepackage{float}
\usepackage{parskip}
\usepackage{graphicx}
\usepackage{soul}
\usepackage{orcidlink}

\newtheorem{theorem}{Theorem}

\newtheorem{remark}{Remark}
\newtheorem{assumption}{Assumption}

\newtheorem{objection}{Objection}
\newenvironment{proof}{\textsf{\bfseries{}Proof. }}{\hfill$\square$\par}

\usepackage{verbatim}
\usepackage{enumerate}

\usepackage{physics}

\usepackage{times}
\usepackage{sectsty}
\usepackage{xcolor,colortbl}
\usepackage[framemethod=tikz]{mdframed}
\usepackage{caption}
\usepackage{adjustbox}
\usepackage{float}
\usepackage{systeme}
\usepackage{multicol}
\usepackage{graphicx}
\usepackage[margin=1in]{geometry}
\usepackage{url}
\usepackage{hyperref}

\def\({\left(}
\def\){\right)}

\newcommand{\hilbert}{\mathcal{H}}
\newcommand{\mc}[1]{\mathcal{#1}}

\newcommand{\pobs}[1]{#1}
\newcommand{\obs}[1]{\mathsf{\pobs{#1}}}
\newcommand{\uop}[1]{\widehat{\mathbf{#1}}}

\newcommand{\iprod}[2]{\left\langle#1\middle|#2\right\rangle}
\newcommand{\oprod}[2]{\left|#1\middle\rangle\middle\langle#2\right|}

\makeatletter
\@ifundefined{widetext}{}{}
\@ifundefined{turnpage}{}{}
\@ifundefined{ruledtabular}{}{}
\@ifundefined{acknowledgments}{%
}{}
\makeatother

\providecommand{\pacs}[1]{}
\providecommand{\preprint}[1]{}

\begin{document}

\title{ND-Photonic QRNGs in a Noisy Environment}

%\author{J. M. Ag\"{u}ero Trejo, C. S. Calude, %O. C. Stoica}
\author{J. M. Ag\"{u}ero Trejo\footnote{School of Computer Science, University of Auckland, New Zealand. Email: \href{mailto:jagu688@aucklanduni.ac.nz}{jagu688@aucklanduni.ac.nz}.} 
%\href{mailto:manuel.aguero15@gmail.com}.{manuel.aguero15@gmail.com}.}
\ \orcidlink{0000-0001-9631-0326}, 
Cristian S. Calude\footnote{Corresponding author. School of Computer Science, University of Auckland, New Zealand. Email: \href{mailto:c.calude@auckland.ac.nz}{c.calude@auckland.ac.nz}.}% \href{mailto:cscalude@gmail.com}{cscalude@gmail.com}.}
\ \orcidlink{0000-0002-8711-6799}, 
O. C. Stoica\footnote{Department of Theoretical Physics, NIPNE--HH, Bucharest, Romania. 	Email: \href{mailto:cristi.stoica@theory.nipne.ro}{cristi.stoica@theory.nipne.ro}.}
%\href{mailto:holotronix@gmail.com}{holotronix@gmail.com.}.} 
\ \orcidlink{0000-0002-2765-1562}}

\date{}

\maketitle
\begin{abstract}
Standard pseudo-random generators have weaknesses that have led to the development of quantum random number generators (QRNGs). However, common QRNG validation methods, whether based on quantum indeterminism or statistical tests, are insufficient to guarantee high-quality randomness. In contrast, a mathematical theory based on the Located Kochen-Specker Theorem proves that 3D QRNGs produce maximally unpredictable outputs without using entanglement, and both theory and experiments have supported their security. The paper focuses on a practical photonic implementation of a 3D QRNG that is easier to deploy than cryogenic superconducting implementations. As any physical implementation is subject to various measurement errors, it is important to study theoretically and experimentally the type and role of errors in 3D QRNGs. In this paper, we will model the photonic 3D QRNG as an open quantum system, constructed as an arrangement of imperfect beam-splitters with a range of losses based on the fidelity of its components, and we will show that under certain conditions, the process remains within the scope of the Kochen-Specker Theorem, which guarantees maximum unpredictability.
\end{abstract}

 {\bf Keywords}: Located Kochen-Specker Theorem, 3D QRNG, validation, photonic implementation,  measurement error

%%%%%%%%%%%%%%%%%%%%%%%%%%%%%%%%%%%%%%
\section{Introduction}
Random numbers play an essential role in many applications, such as cryptography.  Currently, almost all practical applications rely on Pseudo-Random Number Generators (PRNGs), which have well-known pitfalls (predictability, short periods, and non-uniform distribution).
This problem has motivated the development of several Quantum Random Number Generators (QRNGs), the best known being Quantis~\cite{Quantis2020Q,mannalath_comprehensive_nodate}.  To ensure a QRNG is suitable for security applications, either its outputs are postulated to be unpredictable based on the ``intrinsic'' indeterminism of quantum mechanics, and/or its unpredictability is tested with statistical tests.  Both arguments are too weak to certify the quality of randomness.

3D QRNGs that generate provably maximally unpredictable strings by measuring localized value indefinite variables have been designed and studied~\cite{aguero_trejo_new_2021}.  These QRNGs do not use entanglement and are provably better than {\it any} PRNG~\cite{trejo_photonic_2023}.  In addition, robust experimental results probing the maximal unpredictability of quantum random strings generated by a 3D QRNG implemented with a superconducting transmon confirmed the theoretical results~\cite{AGUEROTREJO2024114632}.

The photonic implementation of the 3D QRNG based on the located variant of the Kochen-Specker Theorem~\cite{2015-AnalyticKS}, studied in~\cite{RSPA23}, does not require very low temperatures, so it is better suitable for practical applications than the 3D QRNG implemented with a superconducting transmon coupled to a microwave cavity attached to a dilution cryostat cooled down to the base temperature of $\sim$20 mK~\cite{PhysRevLett.119.240501}.

As any physical implementation is subject to various measurement errors, it is important to study, both theoretically and experimentally, the types and roles of errors in QRNGs.  In what follows, we will study this problem for the photonic implementation of the 3D QRNG  studied in~\cite{RSPA23} based on the located variant of the Kochen-Specker Theorem~\cite{2015-AnalyticKS}.  In this noisy environment, we can no longer assume that the operators used to model the photonic embodiment are unitary, hence we will model the 3D QRNG as an open quantum system, constructed as an arrangement of imperfect beam-splitters with a range of losses based on the fidelity of actual components, and we will show that under certain conditions, the process remains within the scope of the Kochen-Specker Theorem which guarantees maximum unpredictability.

We will also generalize the located variant of the Kochen-Specker Theorem~\cite{2015-AnalyticKS} to unsharp measurements, by applying it to the pointer states.

%%%%%%%%%%%%%%%%%%%%%%%%%%%%%%%%%%%%%%
\section{Notation and Definitions}

A $\sigma$-algebra on a set $\mathcal{X}$ is a non-empty collection of subsets of $\mathcal{X}$ closed under complement, countable unions, and countable intersections.  The \emph{Borel space} associated to a topological space $\mathcal{X}$ is the pair $(\mathcal{X},\mathcal{B})$, where $\mathcal{B}$ is the $\sigma$-algebra of Borel sets of $\mathcal{X}$. A \emph{Borel function} is a measurable function between two Borel spaces, \cite{Holevo2011ProbabilisticAndStatisticalAspectsOfQuantumTheory}.

Let $\mathcal{H}$ be a Hilbert space and $\mathcal{B(\mathcal{H})}$ the algebra of bounded operators from $\mathcal{H}$ into itself.
By $\mathcal{H}_N$ we denote an $N$-dimensional Hilbert space and $\mathcal{I}_N$ is the identity map on $\mathcal{B}(\mathcal{H})$.
A quantum map $\xi$ is a mapping from a space of density operators to another space of density operators $\rho \mapsto \rho' = \xi[\rho]$.  A  quantum map $\xi$ is \emph{positive} if it maps positive density operators $\rho$ to positive density operators $\rho'$ and  it is \emph{completely positive} if any trivial extension $$\xi\otimes\mathcal{I}_N = \mathcal{B}(\mathcal{H}_d)\otimes \mathcal{B}(\mathcal{H}_{N})\to \mathcal{B}(\mathcal{H}_d)\otimes \mathcal{B}(\mathcal{H}_{N}) $$ is \emph{positive}.

%%%%%%%%%%%%%%%%%%%%%%%%%%%%%%%%%%%%%%
\section{Background}
\label{s:background}
We denote the observable projecting onto the linear subspace spanned by a vector $\ket{\psi}$ as $\obs{P}_\psi=\frac{\oprod{\psi}{\psi}}{\iprod{\psi}{\psi}}$.
We then fix a positive integer $n > 2$ and let $\mathcal{O} \subseteq \{  \obs{P}_{\psi} \mid   \ket{\psi} \in \mathbb{C}^{n} \}$ be a non-empty set of one-dimensional projection observables on the Hilbert space $\mathbb{C}^{n}$.

A set $\mathcal{C} \subset \mathcal{O}$ is a {\it context} of $\mathcal{O}$ if $\mathcal{C}$ has $n$ elements and for all $\obs{P}_{\psi}, \obs{P}_{\phi} \in \mathcal{C}$ with $\obs{P}_{\psi} \neq \obs{P}_{\phi}, \braket{\psi}{\phi}=$0. 
A  {\it value assignment function} (on $\mathcal{O}$) is a partial function $v:\mathcal{O}\to \{0,1\}$ assigning values to some (possibly all) observables in $\mathcal{O}$.\footnote{The partiality of the  function $v$ means that $v(\obs{P})$ can be $0,1$ or indefinite.}
An observable $\obs{P} \in \mathcal{O}$ is  {\it value definite} (under the assignment function $v$)  if $v(\obs{P})$ is defined, i.e.~$v(P)=0$ or $v(P)=1$; otherwise, it is  {\it value indefinite} (under $v$). Similarly, we call $\mathcal{O}$ {\it value definite} (under $v$) if every observable $\obs{P} \in \mathcal{O}$ is value definite.

\if01
In what follows, we assume the following premises:
\begin{itemize}
\item \textbf{Admissibility.} This assumption guarantees agreement with quantum mechanics predictions.  Fix a set $\mathcal{O}$ of one-dimensional projection observables in $\mathbb{C}^n$ and the value assignment function $v:\mathcal{O}\to \{0,1\}$. Then $v$ is admissible if the following two conditions hold for every context $\mathcal{C}$ : a) if there exists $P\in\mathcal{C}$ with $v(P)=1$, then $v(P')=0$, for all $P'\in \mathcal{C} \setminus \{P\}$, b) if there exists $P\in\mathcal{C}$ with $v(P)=0$, then $v(P')=1$, for all $P'\in \mathcal{C} \setminus \{P\}$.

\item \textbf{Non-contextuality of definite values.} Every outcome obtained by measuring a value definite observable is non-contextual, i.e. it does not depend on other compatible observables which may be measured alongside it.
\item \textbf{Eigenstate principle.} If a quantum system is prepared in the state $\ket{\psi}$, then
$v(\obs{P}_\psi)=1$.
\end{itemize}

\begin{theorem}[Located Kochen-Specker Theorem~\cite{abbott2012strongrandomness,vi-aeverywhere-2014,2015-AnalyticKS}]
\label{thm:LKS}
Assume a quantum system prepared in the state $\ket{\psi}$ in a Hilbert space $\mathbb{C}^n$ with $n\geq 3$ , and
let $\ket{\phi}$ be any quantum state such that $0<\abs{ \braket{\psi}{\phi}}<1$.
Let $\mathcal{O}$ be a set of one-dimensional projection observables on $\mathbb{C}^n$ containing $\obs{P}_{\psi}$ and $\obs{P}_{\phi}$, and  $v:\mathcal{O}\to\{0,1\}$ a value asignment function.
If the above three
conditions are satisfied: i) admissibility, ii) non-contextuality and iii) eigenstate
principle, then the projection observable $\obs{P}_{\phi}$ is value indefinite.
\end{theorem}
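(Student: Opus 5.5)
We reduce the statement to the standard argument of Abbott, Calude, Conder and Svozil. The key point is that the theorem only needs value indefiniteness for the \emph{specific} observable $\obs{P}_\phi$, and that the admissibility conditions propagate definite values through finitely many contexts in a chain. We may also enlarge $\mathcal{O}$: an admissible, non-contextual assignment on a set of observables restricts to one on any subset. We may therefore assume $\mathcal{O}$ is a finite set of one-dimensional projections that contains $\obs{P}_\psi$, $\obs{P}_\phi$ and a suitable gadget, and look for a contradiction. It also suffices to work in $\mathbb{C}^3$: for $n>3$, embed a three-dimensional subspace containing $\ket\psi,\ket\phi$ and complete each context with a fixed orthonormal basis of its orthogonal complement.

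The proof splits into two cases, $v(\obs{P}_\phi)=1$ and $v(\obs{P}_\phi)=0$. The eigenstate principle gives $v(\obs{P}_\psi)=1$. Then admissibility condition (a) forces $v(\obs{P}_\chi)=0$ for every $\ket\chi\perp\ket\psi$ that lies in a context with $\obs{P}_\psi$.

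\textbf{Case $v(\obs{P}_\phi)=1$.} Since $\ket\phi\ne\ket\psi$ up to phase and they are not orthogonal, admissibility forbids two orthogonal projectors both having value $1$ in one context. It also forbids a context whose members all have value $0$, because condition (b) would then force the other members to $1$, a contradiction. We will use a Kochen--Specker-type reduction gadget: a finite orthogonality graph connecting $\ket\psi$ to $\ket\phi$, with $0<|\langle\psi|\phi\rangle|<1$. In it, assigning $1$ to both endpoints forces, through chains of triangles (contexts), some context to receive two $1$'s or all $0$'s. For $|\langle\psi|\phi\rangle|$ below $1/\sqrt2$ there is the explicit 37-vector gadget of Abbott et al.; for larger overlaps we interpolate through intermediate vectors $\ket{\psi}=\ket{a_0},\ket{a_1},\dots,\ket{a_k}=\ket{\phi}$, each step using a Clifton-type gadget. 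In such a gadget, $v(\obs{P}_{a_i})=1$ forces $v(\obs{P}_{a_{i+1}})=0$ or $1$ as required.

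\textbf{Case $v(\obs{P}_\phi)=0$.} Here we use a second gadget in which $v(\obs{P}_\psi)=1$ together with $v(\obs{P}_\phi)=0$ forces, via condition (b), a vector orthogonal to $\ket\psi$ to take value $1$. This contradicts condition (a) applied in the context of $\obs{P}_\psi$. A simple construction is to take the vector $\ket{\chi}\perp\ket\phi$ in the plane of $\ket\psi,\ket\phi$ and another vector completing a context with $\ket\phi$. Condition (b) then forces both of these to value $1$. Chaining with a further context reaching $\ket\psi^\perp$ yields the contradiction, again possibly through a finite chain of intermediate vectors.

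Non-contextuality is used throughout, since each projection keeps the same value in every context it belongs to; this is what makes gadgets with shared vertices consistent. The main obstacle is the explicit construction and verification of the gadgets for \emph{all} overlaps $0<|\langle\psi|\phi\rangle|<1$, i.e.\ showing that the required orthogonality graphs exist in $\mathbb{C}^3$ with vectors depending continuously on the angle. My first attempt would be to reuse the published constructions from the cited Located Kochen-Specker references and only check that the chaining argument covers the full range of overlaps.
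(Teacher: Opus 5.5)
The paper does not prove this statement. It imports it from the cited papers of Abbott, Calude, Svozil (and Conder), and the form it actually keeps in force is existential: \emph{there exists} a finite $\mathcal{O}$ containing $\obs{P}_\psi,\obs{P}_\phi$ such that every admissible $v$ with $v(\obs{P}_\psi)=1$ leaves $\obs{P}_\phi$ value indefinite. This matters because your opening reduction runs the wrong way. Restricting an admissible $v$ to a subset lets you pass from any $\mathcal{O}\supseteq\mathcal{G}$ down to a gadget $\mathcal{G}$. It does not let you enlarge a given $\mathcal{O}$ to contain $\mathcal{G}$, since an admissible $v$ on $\mathcal{O}$ need not extend admissibly to $\mathcal{O}\cup\mathcal{G}$. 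For an arbitrary $\mathcal{O}$, as in the wording you were given, the claim is in fact false. Take $\mathcal{O}=\{\obs{P}_\psi,\obs{P}_\phi\}$: it contains no context, since contexts have $n\geq 3$ elements, so $v(\obs{P}_\psi)=v(\obs{P}_\phi)=1$ is admissible. Your strategy can only yield the existential form, equivalently the conclusion for every $\mathcal{O}$ containing a fixed finite gadget. For that form your reduction to $\mathbb{C}^3$ is correct.

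Both cases also need repair.

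\textbf{Case $v(\obs{P}_\phi)=0$.} Condition (b) cannot push both $\ket{\chi}$ and the completing vector to $1$, because (a) forbids two $1$'s in one context. Take instead $\ket{\chi'}$ orthogonal to $\operatorname{span}\{\ket{\psi},\ket{\phi}\}$, and let $\ket{\chi}$ complete $\{\ket{\phi},\ket{\chi'}\}$ to an orthonormal basis. Condition (a), in a context containing $\obs{P}_\psi$ and $\obs{P}_{\chi'}$, gives $v(\obs{P}_{\chi'})=0$. Then (b) in $\{\obs{P}_\phi,\obs{P}_\chi,\obs{P}_{\chi'}\}$ gives $v(\obs{P}_\chi)=1$. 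Since $\ket{\psi}=\braket{\phi}{\psi}\ket{\phi}+\braket{\chi}{\psi}\ket{\chi}$, we have $0<\abs{\braket{\chi}{\psi}}<1$. So this is just the other case with $\chi$ in place of $\phi$, and no second gadget is needed.

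\textbf{Case $v(\obs{P}_\phi)=1$.} The chain from $\ket{\psi}$ to $\ket{\phi}$ that you describe produces no contradiction. A vertex with value $0$ cannot feed the next gadget, so only true-implies-true steps propagate. A chain of those from $\ket{\psi}$ to $\ket{\phi}$ merely reproduces $v(\obs{P}_\phi)=1$. What you need is:
(i) a true-implies-true gadget, valid for all pairs whose overlap lies in some fixed window, that forces $v=1$ at the far end using (a) and (b) as forward rules only. Showing merely $v\neq 0$ is not enough, and case splits on intermediate projectors are not allowed, because these may be value indefinite.
(ii) the elementary fact that any two unit vectors with overlap in $(0,1)$ are linked by a finite chain whose consecutive overlaps lie in that window.
(iii) to run the chain from $\ket{\phi}$ to $\ket{c}=(\ket{\phi}-\braket{\psi}{\phi}\ket{\psi})/\sqrt{1-\abs{\braket{\psi}{\phi}}^2}$, which is orthogonal to $\ket{\psi}$ and satisfies $0<\braket{c}{\phi}<1$.
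Then $v(\obs{P}_c)=1$ contradicts (a) in a context containing $\obs{P}_\psi$.

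With these repairs your outline reduces the theorem to (i). That is the genuine content, and you currently leave it entirely to the references.
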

\fi
%begin_marius

In what follows, we assume the following premises:
\begin{itemize}
\item \textbf{Admissibility.} This assumption guarantees agreement with quantum mechanics predictions.  Fix a set $\mathcal{O}$ of one-dimensional projection observables in $\mathbb{C}^n$ and the value assignment function $v:\mathcal{O}\to \{0,1\}$. Then $v$ is admissible if the following two conditions hold for every context $\mathcal{C} \subseteq \mathcal{O}$ : a) if there exists $P\in\mathcal{C}$ with $v(P)=1$, then $v(P')=0$, for all $P'\in \mathcal{C} \setminus \{P\}$, b) if there exists $P\in\mathcal{C}$ with $v(P')=0$ for all $P'\in \mathcal{C} \setminus \{P\}$, then $v(P)=1$.

\item \textbf{Non-contextuality of definite values.} Every outcome obtained by measuring a value definite observable is non-contextual, i.e. it does not depend on other compatible observables which may be measured alongside it. (Note: this is captured in the definition of a value assignment function and the assumption that the measurement of a value definite observable yields for it the pre-assigned value.)
\item \textbf{Eigenstate principle.} If a quantum system is prepared in the state $\ket{\psi}$, then
$v(\obs{P}_\psi)=1$.
\end{itemize}

\begin{theorem}[Located Kochen-Specker Theorem~\cite{abbott2012strongrandomness,vi-aeverywhere-2014,2015-AnalyticKS}]
\label{thm:LKS}
Assume a quantum system prepared in the state $\ket{\psi}$ in a Hilbert space $\mathbb{C}^n$ with $n\geq 3$ , and
let $\ket{\phi}$ be any quantum state such that $0<\abs{ \braket{\psi}{\phi}}<1$.
There exists $\mathcal{O}$ a set of one-dimensional projection observables on $\mathbb{C}^n$ containing $\obs{P}_{\psi}$ and $\obs{P}_{\phi}$, so that for every value assignment function $v:\mathcal{O}\to\{0,1\}$ 
If the above three
conditions are satisfied: i) admissibility, ii) non-contextuality and iii) eigenstate
principle, then the projection observable $\obs{P}_{\phi}$ is value indefinite (under $v$).
\end{theorem}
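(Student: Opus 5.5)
The plan follows the Abbott--Calude--Svozil strategy. First reduce to $n=3$, then build an explicit finite set $\mathcal{O}$ of rank-one projections on which admissibility forces contradictions both when $v(\obs{P}_\phi)=1$ and when $v(\obs{P}_\phi)=0$. The eigenstate principle fixes $v(\obs{P}_\psi)=1$. Admissibility (a) then propagates the value $0$ to every projection in a context with $\obs{P}_\psi$; that is, every vector orthogonal to $\ket{\psi}$ that lies in $\mathcal{O}$ gets value $0$. Admissibility (b) propagates $1$ whenever all other members of a context are $0$. Non-contextuality enters only through the fact that $v$ is a single function on $\mathcal{O}$: a projection shared by several contexts carries the same value in each.

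For $n>3$, I would embed the three-dimensional construction into the subspace spanned by $\ket{\psi}$, $\ket{\phi}$ and one further orthogonal vector. Each three-element context is then completed with a fixed orthonormal basis $\ket{e_4},\dots,\ket{e_n}$ of the orthogonal complement, and all the $\obs{P}_{e_i}$ are added to $\mathcal{O}$. Since $\obs{P}_\psi\in\mathcal{O}$ and $\ket{e_i}\perp\ket{\psi}$, the extra projections are all forced to $0$ as soon as a context containing $\obs{P}_\psi$ together with them is included. So the $n$-dimensional admissibility conditions on the completed contexts reduce exactly to the three-dimensional ones.

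In dimension $3$, after a unitary change of basis I may take $\ket{\psi}=(0,0,1)$ and $\ket{\phi}=(\sqrt{1-p^2},0,p)$ with $0<p<1$, with phases absorbed. The two cases are handled as follows.

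\textbf{Case $v(\obs{P}_\phi)=1$.} I want a chain of contexts showing that some pair of orthogonal vectors are both forced to $1$, or that some context is forced to be all $0$ but must contain a $1$ by (b), a contradiction. The standard gadget is a Clifton-type reduced Kochen--Specker graph. Two vectors $a,b$ with $v(a)=v(b)=1$ force $v(c)=1$ for certain vectors $c$ at an angle determined by the angle between $a$ and $b$. Iterating this a finite number of times depending on $p$ reaches a vector orthogonal to $\ket{\psi}$ that is forced to $1$, contradicting the $0$ forced by $\ket{\psi}$.

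\textbf{Case $v(\obs{P}_\phi)=0$.} Here I would use the dual gadget. Two non-orthogonal vectors, one with value $1$ and one with value $0$, together with admissibility (b), force a $1$ on a vector in a context whose other members are $0$. Propagating this from $\ket{\psi}$ (value $1$) and $\ket{\phi}$ (value $0$) again yields a vector orthogonal to $\ket{\psi}$ with value $1$, or two orthogonal vectors both equal to $1$.

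The main obstacle is the explicit geometry. I need, for every $p\in(0,1)$, a finite orthogonality graph realising each implication, with all vectors lying in $\mathbb{C}^3$ and the required orthogonalities holding exactly. I would first handle the range $p\le 3/\sqrt{14}$ using the known 37-vector construction, where the needed angles exist directly. For larger $p$, I would interpolate: insert intermediate vectors $\ket{\phi_k}$ with controlled overlaps with $\ket{\psi}$ and apply the gadget repeatedly. The difficulty is bookkeeping that a finite number of steps suffices, which I would obtain from a continuity argument on the reachable angle.
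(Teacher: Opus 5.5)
\textbf{This has a gap in its central step.} The paper does not prove Theorem~\ref{thm:LKS}; it quotes it from Abbott--Calude--Svozil, and your overall architecture follows theirs. The parts you make explicit are sound, but the lemma that carries the whole theorem is missing, and the iteration you sketch for it would not work as described.

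\textbf{What is fine.} The embedding into $\mathbb{C}^n$ works. Every context of your $\mathcal{O}$ is a three-dimensional context plus all the $\ket{e_i}$, and one context $\{\psi,u,u',e_4,\dots,e_n\}$ forces $v(\obs{P}_{e_i})=0$. One small correction: a vector $x\perp\psi$ is forced to $0$ only if the third vector of $\{\psi,x,\cdot\}$ is also put in $\mathcal{O}$. The case $v(\obs{P}_\phi)=0$ is sound once made explicit. Take $c\perp\psi,\phi$, which gets $0$. The context $\{\phi,c,d\}$ then forces $v(\obs{P}_d)=1$, and $\abs{\braket{\psi}{d}}=\sqrt{1-p^2}$. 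So both cases rest on one lemma: for every $q\in(0,1)$ there is a finite set on which no admissible $v$ has $v(\obs{P}_a)=v(\obs{P}_b)=1$ with $\abs{\braket{a}{b}}=q$. The second case needs this at $q=\sqrt{1-p^2}$, which exceeds $3/\sqrt{14}$ exactly when $p<\sqrt{5/14}$. So the regime $q\to1$ cannot be avoided for small $p$ either.

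\textbf{The gap.} That lemma is the whole content of the theorem, and your proposal does not supply it. Moreover, your iteration cannot work if $\ket{\psi}$ stays one endpoint. The elementary gadget is $x\perp a$, $y\perp b$, $x\perp y$, with $c$ completing the context $\{x,y,c\}$. It is realizable exactly when $\braket{b}{a}=\braket{b}{c}\braket{c}{a}$ (Binet--Cauchy), so $\abs{\braket{a}{c}}\ge\abs{\braket{a}{b}}$. Hence one step from the pair $(\psi,\phi_k)$ never lands farther from $\psi$ than $\phi_k$. You must iterate on pairs of derived vectors. A ``continuity'' argument gives finiteness only once you have an explicit step map with no fixed point below $1$.

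\textbf{One way to close it.} Work in a real frame with $a,b$ symmetric about the third axis and $a\cdot b=q$. The two vectors $z_\pm=(0,\pm\sin\gamma,\cos\gamma)$ with $\cos^2\gamma=2q/(1+q)$ are both forced to $1$, and $z_+\cdot z_-=g(q):=(3q-1)/(1+q)$. The map $g$ is increasing, satisfies $g(q)<q$ on $(1/3,1)$, and has $1$ as its only fixed point. So finitely many iterations bring the overlap into $(0,1/3]$. At exactly $1/3$, one more step gives two orthogonal vectors forced to $1$ (Specker's bug). An overlap $q<1/3$ is first raised to $1/3$ by taking a forced $c$ with $\abs{\braket{a}{c}}=1/3$ and $\abs{\braket{b}{c}}=3q$. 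This makes the 37-vector set unnecessary. The final contradiction is then two orthogonal $1$'s, not a $1$ orthogonal to $\psi$.
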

%\end_marius
A measurement is {\it sharp} if it is perfectly accurate. Many important results in the study of quantum phenomena require sharp measurements. Nonetheless, experimentally, it is impossible to perform a perfectly precise measurement; working with \emph{unsharp} measurements can significantly complicate and sometimes invalidate certain results.

Although an unsharp version of the Kochen-Specker exists~\cite{Breuer2002KochenSpeckerTheoremForUnsharpSpin1Observables,Breuer2002KochenSpeckerTheoremForFinitePrecisionSpinOneMeasurements,Breuer2003AnotherNoGoTheoremForHiddenVariableModelsOfInaccurateSpin1Measurements,cabello2002FinitePrecisionMeasurementDoesNotNullifyTheKochenSpeckerTheorem},  so far, no unsharp equivalent of the localised variant of the Kochen-Specker Theorem has been formulated. %Under certain conditions, we can still ensure that the Kochen-Specker Theorem applies.

%\cristi{
A major source of error in integrated photonics is photon loss, whether from escaping waveguides, from detector physical limitations, or from using a source that is not guaranteed to emit exactly one photon at a time. A way to model loss is to include additional modes in the model, such as waveguides that guide the lost photons to outputs not monitored by photon detectors. This implies an increase in the dimension of the Hilbert space, for example, by adding another waveguide to each of the three waveguides to model loss. 
This may be useful in some situations, but in fact, the problem takes care of itself%\hl{ 
if we can guarantee that the source is single-photon: if the photon is lost, the detector will simply not detect a photon, and if two or more detectors click, we are dealing with an external photon, and we ignore the result. In practice, we can ignore undetected photons without affecting the quality of the QRNG.
%by skipping a finite number of digits in the resulting string 
We will use heralded photons: in this approach, the photons are generated in pairs, and the detection of the photon is verified by the simultaneous detection of its ``twin''. If only the twin photon is detected, we are dealing with a loss, which is automatically ignored anyway. If the twin photon is not detected, we will ignore the result, as it may be due to a stray photon.
%}

We will make tests to ensure that the physically realized device is not significantly affected by losses or external noise. Such a test consists of coupling in series another device identical to this one but mirroring its behavior, so that it inverts the unitary transformation implemented by this one. The details are given in the article~\cite{CDMTCS587}.

Another point of caution is that the noise and unsharpness may lead to violations of the inequality $0<\abs{ \braket{\psi}{\phi}}<1$. This would make Theorem~\ref{thm:LKS} inapplicable. The physical meaning of this inequality is that, for each of the detectors, there is always a non-vanishing probability that it detects the output photon. If one of them, corresponding to the observable $\obs{P}_{\phi}$, has zero probability to click, then we get $\abs{ \braket{\psi}{\phi}}=0$. If the other detectors have zero probability to click, $\abs{ \braket{\psi}{\phi}}=1$. Experiments should confirm that the inequality is consistently satisfied. But this is automatically ensured by the tests showing that the theoretical probabilities are obtained.

In this article, we study the behaviour of a QRNG~\cite{CDMTCS583} which selects a value of an indefinite observable in $\mathbb{C}^n$ and then measures it unsharply.

To this end, we will analyse each step of the QRNG.

%%%%%%%%%%%%%%%%%%%%%%%%%%%%%%%%%%%%%%
\section{First and Second Measurement Operator}
\label{s:FirstSecondMeasurement}
%%%%%%%%%%%%%%%%%%%%%%%%%%%%%%%%%%%%%%
\subsection{First measurement operator}
Choose an $N$-dimensional Hermitian operator with non-degenerate spectrum (that is, the operator has no linearly independent eigenvectors for the same eigenvalue).  From Theorem~\ref{thm:LKS}, it follows that for any diagonalisable observable $\obs{O}$ with spectral decomposition $\obs{O}=\sum_{i=1}^N\lambda_i \obs{P}_{\lambda_i}$, where $\lambda_i$ denotes each distinct eigenvalue with corresponding eigenstate $\ket{\lambda_i}$, $\obs{O}$ has a predetermined measurement outcome if and only if each projector in its spectral decomposition has a predetermined measurement outcome. 
    
So ideally, we want an operator with eigenvalues corresponding to the standard Cartesian basis on dimension $N$; hence, the basis states correspond to the $n$ inputs of the final measurement device (more generally, an operator satisfying the requirements up to a change of basis).  We will use the first measurement operator to provide a value definite state (preparation state) so that, together with the second operator, it satisfies Theorem~\ref{thm:LKS}; that is, the eigenstates of the second measurement operator are neither orthogonal nor parallel to the preparation state.

Note that choosing a non-degenerate operator allows us to map each one of its distinct eigenvectors to each input port of a physical arrangement.  Also, in \cite{trejo_photonic_2023}, choosing the $\obs{S}_z = \obs{S}(0,0)$ operator conforms to the Cartesian standard basis and plays an important role when introducing imperfections.

%%%%%%%%%%%%%%%%%%%%%%%%%%%%%%%%%%%%%%
\subsection{Second measurement operator} 
Choose an $N$-dimensional Hermitian and unitary operator with distinct eigenvectors.  This operator must be different from the first measurement operator.

To do the measurement of an observable represented by a Hermitian operator using an integrated photonic implementation, we apply a unitary transformation that maps the eigenbasis of the second measurement operator into the standard Cartesian basis implemented by the detectors of the $N$ modes.

As an example for three dimensions, in~\cite{trejo_photonic_2023}, the first measurement was the generalised spin-1 observable
\begin{equation}
\label{eq:spin-one-gen}
\obs{S}(\theta,\varphi)=
\begin{pmatrix}
\cos{\theta} & \frac{e^{-i\varphi}\sin{\theta}}{\sqrt{2}}& 0\\
\frac{e^{i\varphi}\sin{\theta}}{\sqrt{2}} & 0 & \frac{e^{-i\varphi}\sin{\theta}}{\sqrt{2}}\\
0 & \frac{e^{i\varphi}\sin{\theta}}{\sqrt{2}} & -\cos{\theta}
\end{pmatrix}.
\end{equation}

The $\obs{S}_x= \obs{S}(\frac{\pi}{2},0)$ 
operator was chosen to be the unitary second measurement operator $\obs{U}_x$:

\begin{equation}
\label{eq:second-measurement-operator}
\obs{U}_x =\frac{1}{2}
\begin{pmatrix}
1 & \sqrt{2} & 1\\
\sqrt{2} & 0 & -\sqrt{2}\\
1 & -\sqrt{2} & 1
\end{pmatrix}.
\end{equation}

Note that the degeneracy of $\obs{U}_x$ is not a problem.  To see this, it is useful to analyze the role of these operators and the non-contextuality assumption in the original formulation of the Kochen-Specker Theorem.  First, we recall the following conditions for a value assignment map $v$.

\begin{itemize}
    \item (Self-adjoint operator) For any self-adjoint operator $\obs{A}$ corresponding to an observable $\pobs{A}$, we have that $v(\obs{A})\in\{{o_i}\}$, where $\{o_i\}$ are the eigenvalues of $\obs{A}$. That is, each observable corresponds to an element of physical reality, and the values assigned correspond to the set of possible outcomes.
    \item (Quasi-linearity) The commuting operators $\obs{A},\obs{B}$, that is $[\obs{A},\obs{B}]=0$,  satisfy $v(a\obs{A} + b\obs{B})=aV(\obs{A})+bV(\obs{B})$, where $a,b\in\mathbb{R}$.
    \item (Non-contextuality) All observables are assigned values simultaneously regardless of the measurement context.
\end{itemize}

From Quasi-linearity, it follows that the value map must preserve the algebraic structures of the operators.  That is, for any Borel function $f$, we have that $v(\obs{A}) = f(v(\obs{B}))$, whenever $\obs{A}=f(\obs{B})$. This is the core element leading to a contradiction in many proofs of the Kochen-Specker Theorem.  The contradiction only occurs for degenerate operators in the original formulation as a result of the following properties: If an operator $\obs{A}$ is degenerate, then for some non-degenerate operators $\obs{B},\obs{C}$ and Borel functions $f,g$, we have that 
\begin{equation}
\label{eq:AB-fg}
\obs{A}=f(\obs{B})\text{ and } \obs{A} = g(\obs{C}), \text{ with }[\obs{B},\obs{C}]\neq 0.
\end{equation}

Thus, from Quasi-linearity it follows that
\begin{equation}
\label{eq:fvB-gvC}
v(\obs{A}) = f\big(v(\obs{B})\big)=g\big(v(\obs{C})\big).
\end{equation}

As the sum of the projectors of a complete orthonormal set of states yields the identity operator and orthogonal projectors commute, the sum of their assigned values must be one.  For one-dimensional degenerate operators, we obtain a single projector with value 1 and $N-1$ zero-valued projectors in an $N$-dimensional Hilbert space.
This leads to a contradiction when considering all complete sets of projectors, since we consider a projector to be a function of different non-commuting non-degenerate operators; in other words, degenerate one-dimensional operators must be assigned the same value regardless of which non-degenerate operator it is considered to be a function of.  As a consequence, we are forced to accept the existence of value-indefinite observables or introduce some form of contextuality.

Note that the first measurement operator helps us prepare the measurement context with a corresponding value definite preparation state.  Moreover note that all our Hermitian operators are self-adjoint (this property is also essential for our error handling framework), and degeneracy of an observable can then be simply understood as having more than one measurement basis (or measurement context) for an observable, and is not detrimental when localizing a value-indefinite observable (nor is it required in the localized variant of the Kochen-Specker Theorem).  As Heisenberg said in~\cite{Heisenberg1958-HEIPAP},

\begin{quote}
\emph{``What we observe is not nature itself, but nature exposed to our method of questioning''}
\end{quote}

Taking the 3D-QRNG as an example, note that the first measurement operator prepares an eigenstate of a spin projector along a different axis from the second measurement operator, while satisfying Theorem~\ref{thm:LKS}.  So, since eigenvalue relations are preserved for rotated states \cite{Breuer2002KochenSpeckerTheoremForUnsharpSpin1Observables} (see section \ref{s:breuer-unsharp} for more details), all eigenstates of the sharp spin projectors used in this implementation are eigenstates of the unsharp spin projectors.
Moreover, from Theorem~\ref{thm:LKS} we see that for any diagonalisable observable $\obs{O}$ with spectral decomposition $\obs{O}=\sum^n_{i=1}\lambda_i\obs{P}_{\lambda_i}$ where $\lambda_i$ denotes each distinct eigenvalue with corresponding eigenstate $\ket{\lambda_i}$, $\obs{O}$ has a predetermined measurement outcome if and only if each projector in its spectral decomposition has a predetermined measurement outcome.  Thus, in general, we have the following result.

\begin{theorem}
If the following conditions hold for a QRNG certified via value-indefiniteness:
\begin{enumerate}[a)]
    \item The preparation state is an eigenstate of the first measurement operator.
    \item The second measurement operator does not commute with the first measurement operator.
    \item The choice of preparation state ($\ket{\psi}$) satisfies the conditions of Theorem~\ref{thm:LKS} with respect to each eigenstate ($\ket{\phi}$) of the second measurement operator.%cristi
\end{enumerate}
 Then, unsharp measurements do not affect the value indefiniteness of the outcomes.
 \end{theorem}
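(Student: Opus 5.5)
The plan is to reduce the unsharp situation to the sharp one handled by Theorem~\ref{thm:LKS}, working projector by projector. The key fact used is the one recalled just before the statement: an observable $\obs{O}=\sum_i\lambda_i\obs{P}_{\lambda_i}$ has a predetermined outcome if and only if every projector in its spectral decomposition does.

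\textbf{Step 1 (sharp skeleton).} First we fix the value assignment using hypothesis a). Since the preparation state $\ket{\psi}$ is an eigenstate of the first measurement operator, the eigenstate principle gives $v(\obs{P}_\psi)=1$. Hypothesis b) guarantees that the second operator has at least one eigenstate $\ket{\phi}$ that is neither parallel nor orthogonal to $\ket{\psi}$. Commuting non-degenerate operators share an eigenbasis, so without b) every $\ket{\phi}$ could coincide with $\ket{\psi}$ up to phase, or be orthogonal to it. Hypothesis c) then gives $0<\abs{\braket{\psi}{\phi}}<1$ for each eigenstate $\ket{\phi}$. Applying Theorem~\ref{thm:LKS} to each such pair yields a set $\mathcal{O}$ on which $\obs{P}_\phi$ is value indefinite under every admissible, non-contextual $v$. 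By the spectral-decomposition remark, the second measurement operator as a whole is therefore value indefinite.

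\textbf{Step 2 (passing to the unsharp measurement).} Next we model the unsharp measurement in the manner of Breuer (section~\ref{s:breuer-unsharp}). Each outcome corresponds to an effect $\obs{E}_\phi=f(\obs{S})$ obtained by smearing the sharp spectral projectors, that is, a Borel function or convex mixture of them. Equivalently, via a pointer/ancilla, the effect arises from a sharp projector on an extended space. The aim is to show two things:
\begin{enumerate}[(i)]
\item the eigenvectors of the sharp projectors remain eigenvectors of the unsharp effects, because eigenvalue relations are preserved under rotations;
\item the value assigned to $\obs{E}_\phi$ is then forced, through the Borel-function rule $v(f(\obs{B}))=f(v(\obs{B}))$, to be determined by the values of the sharp projectors $\obs{P}_\phi$.
\end{enumerate}
If the unsharp outcome were value definite, this functional dependence would force some $\obs{P}_\phi$ to be value definite, contradicting Step~1. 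Concretely, we apply Theorem~\ref{thm:LKS} to the pointer states. The prepared pointer state plays the role of $\ket{\psi}$, and the pointer eigenstates play the role of $\ket{\phi}$. We then check that the overlap condition $0<\abs{\braket{\psi}{\phi}}<1$ survives, which holds provided the noise does not drive any outcome probability to $0$ or $1$.

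\textbf{Main obstacle.} The hardest step is justifying that the smearing map $f$ is invertible enough that definiteness of the unsharp outcome implies definiteness of the sharp projector. A many-to-one $f$ could collapse indefinite values to a definite one. I would first try to exploit the shared eigenbasis from (i): the effects are then diagonal in the eigenbasis of the second operator, with distinct eigenvalues, so $f$ is injective on the spectrum. If this fails in general, I would state it as a mild additional assumption on the noise, namely that distinct sharp outcomes remain distinguishable, which is consistent with the experimental checks on outcome probabilities described in section~\ref{s:background}.
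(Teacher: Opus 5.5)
Your Step~1 with Step~2(i) is essentially the paper's argument, which uses only three ingredients: Breuer's observation that rotational covariance keeps the eigenstates of the sharp projectors as eigenstates of the unsharp effects, the fact that an observable is value definite iff every projector in its spectral decomposition is, and Theorem~\ref{thm:LKS} applied under a)--c); your pointer-state step is really the paper's later Theorem~\ref{thm-LKS-unsharp-general}, which needs none of Step~2 and in which $\ket{\psi}$ is the total state after pre-measurement, not the pointer's ready state. The paper never addresses the invertibility you flag, and it needs less than injectivity of $f$, which fails in Breuer's covariant spin-$1$ model (the effect for outcome $0$ has the same eigenvalue on both spin-$\pm 1$ eigenstates); it suffices that some eigenvalue of $\obs{E}_\phi$ is simple with eigenvector $\ket{\phi'}$ (for $N=3$, just that $\obs{E}_\phi$ is not a multiple of the identity), since then $\obs{P}_{\phi'}$ is a spectral projector, hence a Borel function, of $\obs{E}_\phi$ and is definite whenever $\obs{E}_\phi$ is, and Breuer's condition $\alpha_1,\alpha_4\notin\{\alpha_2,\alpha_3\}$ guarantees this for all three effects.
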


Note that these results hold for the binary protocol proposed in \cite{calude_binary_2024} due to its unitary equivalence to the ternary protocol from \cite{trejo_photonic_2023}.  The guarantee of value indefiniteness alone does not ensure that the distribution of outcomes is desirable.  Nonetheless, the conditions above represent a natural arrangement that still allows for a good distribution of outcomes while facilitating a feasible physical implementation

For example in the ternary case, choosing the eigenstates $\ket{\pm 1}$ along the z-axis as preparation states lead to the distribution $\frac{1}{4},\frac{1}{2},\frac{1}{4}$. In the binary case we may choose the state $\ket{1}$ or $\ket{0}$ to obtain the distribution $\frac{1}{2},\frac{1}{2}$ with the measurement operator 

\begin{equation}
\label{eq:measurement-operator-U-prime}
\obs{U}'=\begin{pmatrix}
    \frac{1}{\sqrt{2}} & \frac{1}{\sqrt{2}} & 0\\
    \frac{1}{\sqrt{2}} & \frac{1}{\sqrt{2}} & -1 \\
    0 & 0 & 0
\end{pmatrix}.
\end{equation}

We have seen that, under certain conditions, value indefiniteness is unaffected by unsharp measurements.  Nonetheless, it is important to examine the effects of experimental imperfections on the distribution of outcomes.

%%%%%%%%%%%%%%%%%%%%%%%%%%%%%%%%%%%%%%
\section{Limits of Measurements}
\label{s:limits}
Both imperfect technology and the laws of nature limit the practical realization of ideal quantum measurements. These limitations make our measurements unsharp, but, as we will see, they also provide a way to ensure the value indefiniteness even for unsharp measurements. The reason is that these limitations have the same fundamental origin as the Kochen-Specker Theorem.

Wigner \cite{wigner1952MessungQMOperatoren,Wigner1952MessungQMOperatorenPBusch2010EnTranslation} showed that observables that can be realized as ideal measurements are rare, since they have to satisfy severe constraints due to the conservation laws. His proof concerns the pre-measurement, which is the unitary evolution of the measuring device interacting with the observed system, leading to the superposition of more possible outcomes, before the resolution achieved by invoking the Projection Postulate. Given that this is a unitary process, Wigner used the conservation laws to examine whether they constrain the realizability of accurate, repeatable measurements of sharp observables. It turns out that {\it sharp measurements cannot be both accurate and repeatable}.
His result was generalized by Araki and Yanase~\cite{ArakiYanase1960MeasurementofQMOperators}, which is known as the Wigner-Araki-Yanase (WAY) Theorem \cite{LoveridgeBusch2011MeasurementofQMOperatorsRevisited}.

In this context, accuracy is the requirement that the result of a measurement represents faithfully the value of the measured observable. Repeatability is the requirement that if a measurement is repeated, it yields the same result. For measurements used in preparation, repeatability is simply the accuracy of the preparation. A system prepared so that an observable $\obs{A}$ has a certain value $a$ satisfies repeatability if, by measuring the observable $\obs{A}$, the same value $a$ is obtained. Repeatability directly relates to the Eigenstate principle from Section~\S\ref{s:background}.
For discrete observables, it is possible, in principle, to satisfy either accuracy or repeatability, but not both~\cite{LoveridgeBusch2011MeasurementofQMOperatorsRevisited}.

But Ozawa showed that measurements of non-discrete observables are irrepeatable even if we give up accuracy \cite{Ozawa1984QuantumMeasuringProcessesOfContinuousObservables} (moreover, the limitation of repeatability, but also of accuracy, extends to discrete observables, if the Hamiltonian is non-negative~\cite{Stoica2025CanWeAccuratelyReadWriteQuantumData}).

In particular, Ozawa's result implies that there is no way to choose the axis of a spin measurement with perfect accuracy.
To measure a spin along a particular axis, one needs to align the Stern-Gerlach device's magnets along that axis.
This requires tuning two real parameters, for example, the polar angle $\theta$ and the azimuthal angle $\varphi$.
This would be impossible already in a classical world, because it requires controlling an infinite amount of information.
But in a quantum world, there is an additional kind of obstruction.
An intrinsic quantum source of unsharpness, in addition to noise and experimental errors, is the physical law itself.
Ozawa's theorem on measurements of continuous or non-discrete observables says more than the classical fact that infinite precision is impossible in practice: it implies that, whatever axis one chooses, if one wants to verify how the axis is oriented, one will get a different result. 
In particular, since the measurement of the spin of an atom is based on its interaction with the magnetic field, it will not reflect with perfect accuracy the original set-up of the axis along which one intends to measure the spin.

Under these conditions, one may wonder whether it is still possible to ensure in practice that the conditions of the Located Kochen-Specker Theorem~\ref{thm:LKS} can be satisfied.
In the following Section~\S\ref{s:breuer-unsharp} we will present previous results in this direction.

%%%%%%%%%%%%%%%%%%%%%%%%%%%%%%%%%%%%%%
\section{Breuer's Unsharp Kochen-Specker Theorem}
\label{s:breuer-unsharp}
In this section, we discuss Breuer's generalizations of the Kochen-Specker Theorem to unsharp spin $1$ observables, notice their limits of applicability, and show that they are insufficient to certify the QRNG if the observable is unsharp.

Breuer presented his generalization of the Kochen-Specker Theorem to unsharp spin-$1$ observables in \cite{Breuer2002KochenSpeckerTheoremForFinitePrecisionSpinOneMeasurements,Breuer2002KochenSpeckerTheoremForUnsharpSpin1Observables}. He labeled the states by intermediate values using some thresholds $\delta$ (which he called \emph{unsharpness tolerance}) and $1-\delta$, where $0.5>\delta\geq 0$. Triads are then labeled with the values ``almost true'' (for eigenvalues $\geq 1-\delta$) and ``almost false'' (for eigenvalues $\leq \delta$), reducing the proof of the unsharp case to that of the sharp case. But the results of an experiment are always sharp values, even if the measurement is unsharp, because we read them from the pointer, which is a macroscopic object that displays sharp values. Macroscopically distinct states are orthogonal. So one does not observe the intermediate values assigned by the unsharp commuting observables; they remain ``hidden''. That is, the ``unsharp eigenvalues'' may be indefinite, but they remain unobservable. Also, there is a non-vanishing probability to get the wrong result reported, hence the use of the term ``almost'' in Breuer's articles~\cite{Breuer2002KochenSpeckerTheoremForFinitePrecisionSpinOneMeasurements,Breuer2002KochenSpeckerTheoremForUnsharpSpin1Observables}.

Before stating the Breuer Theorem, we need some notation. We denote the unsharp spin-1 observables, forming a positive operator-valued measure, by $F^{\mathbf{n},\epsilon}(i)$, $i\in\{-1,0,1\}$.
The directions $\mathbf{m}$ are distributed with a density $w_{\mathbf{N},\epsilon}\(\mathbf{m}\)$ around the direction $\mathbf{N}$ in space.
The eigenvalues of the three positive operators $F^{\mathbf{z},\epsilon}(i)$ are three values from the set $\{\alpha_1,\alpha_2,\alpha_3,\alpha_4\}$.
Breuer also assumed that the unsharpness is the result of the experimenter's ignorance of what spin observable, assumed to be sharp, is in fact measured,
\begin{assumption}[Breuer's Unsharpness Assumption]
\label{assumption:breuer}
We can model unsharp measurements as sharp measurements of inaccurately known sharp observables.
\end{assumption}

That is, the assumption is that the spin-1 measurement is sharp along \emph{some} well-defined axis; we don't know with perfect accuracy which axis it is.
This was the assumption of Breuer Theorems~\ref{thm:U-KS-breuer} and \ref{thm:U-KS-breuer-second}.
For example, in ~\cite{Breuer2002KochenSpeckerTheoremForUnsharpSpin1Observables}, p. 197, right before Proposition 2, it is stated and justified as follows:
\begin{quote}
[If the unsharp spin properties] %$F^{n,\varepsilon}(i)$
transform covariantly under rotations, they are angular momentum properties and can be regarded as spin properties with the same justification as the sharp spin properties $\obs{P}_{n,i}$. This is in line with Weyl's idea of defining observables by their transformation properties under some kinematic group.
\end{quote}

It can be formulated equivalently as the assumption that unsharp measurements are \emph{post-processing} of projective measurements (PVMs), which implies that value indefiniteness is achieved. We can use this to build a genuinely random QRNG.

Assumption~\ref{assumption:breuer} is stated in Breuer's papers before he stated his theorem, but it is not restated in the theorem. However, the proof is based on it.
That is, the POVM is assumed to commute, which allows simultaneous diagonalization. This is a strong restriction.
We will discuss this below.

\begin{theorem}[Breuer's Unsharp Kochen-Specker Theorem, v. 1]
\label{thm:U-KS-breuer}
For any unsharpness tolerance $0.5>\delta\geq 0$, if in an unsharp spin-$1$ measurement
\begin{enumerate}
	\item[(A1)] 
the densities of apparatus misalignments transform covariantly, \textit{i.e.} if $w_{R\mathbf{n},\epsilon}\(R\mathbf{m}\)=w_{\mathbf{n},\epsilon}\(\mathbf{m}\)$ for all rotations $R$, and
	\item[(A2)] 
the measurement inaccuracy described by the densities $w_{\mathbf{n},\epsilon}\(\mathbf{m}\)$ of the apparatus misalignments is so small that the unsharp spin observables $F^{\mathbf{n},\epsilon}(i)$ have, for all $i\in\{-1,0,1\}$, the eigenvalues $\alpha_1$ and $\alpha_4$ larger or equal to $1-\delta$, and the eigenvalues $\alpha_2$ and $\alpha_3$ smaller or equal to $\delta$,
\end{enumerate}
Then not all sharp spin observables in a Kochen-Specker set of directions can consistently be assigned approximate truth values in a non-contextual way.
\end{theorem}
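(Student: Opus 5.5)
The plan is to reduce the unsharp statement to Breuer's sharp reasoning through Assumption~\ref{assumption:breuer}, and then invoke the classical Kochen-Specker contradiction in its approximate form. Fix a Kochen-Specker set of directions $\{\mathbf{n}_k\}$, i.e.\ a finite set of triads of orthogonal directions in $\mathbb{R}^3$ for which the sharp spin-1 projectors $\obs{P}_{\mathbf{n}_k,0}$ admit no admissible non-contextual $\{0,1\}$ assignment. For instance, take the Kochen-Specker or Peres 33-direction set. Recall that for spin-1, $\obs{P}_{\mathbf{n},0}=1-(\mathbf{n}\cdot\mathbf{S})^2$, and orthogonal directions give commuting projectors summing to the identity.

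First, I will use (A1) to show that the unsharp effects are rotationally covariant: $F^{R\mathbf{n},\epsilon}(i)=U(R)F^{\mathbf{n},\epsilon}(i)U(R)^\dagger$. By Assumption~\ref{assumption:breuer}, each effect is the smeared sharp projector $F^{\mathbf{n},\epsilon}(i)=\int w_{\mathbf{n},\epsilon}(\mathbf{m})\,\obs{P}_{\mathbf{m},i}\,d\mathbf{m}$. Covariance of $w$ then makes $F^{\mathbf{z},\epsilon}(i)$ invariant under rotations about $\mathbf{z}$. Hence $F^{\mathbf{z},\epsilon}(i)$ is diagonal in the $S_z$ eigenbasis, with eigenvalues drawn from the common set $\{\alpha_1,\dots,\alpha_4\}$. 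Consequently, for each triad the three effects $F^{\mathbf{n}_j,\epsilon}(0)$ commute with the corresponding sharp projectors and share their eigenbasis.

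Second, I will define the approximate truth values. Call $F$ \emph{almost true} if it is assigned an eigenvalue $\geq 1-\delta$, and \emph{almost false} if $\leq\delta$. By (A2), every eigenvalue lies in one of these two bands, and since $\delta<0.5$ the two bands are disjoint. So a non-contextual assignment of approximate truth values to the unsharp effects induces a well-defined $\{0,1\}$ map $v$ on the directions.

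Third, I will check that $v$ obeys the admissibility constraints of Section~\S\ref{s:background} on each triad: exactly one direction is almost true and the other two almost false. This is the main obstacle. It requires verifying from the explicit diagonal form that the eigenvalue pattern of $F^{\mathbf{n}_j,\epsilon}(0)$ on the joint eigenbasis of a triad mirrors the sharp pattern (one $\alpha_1$ or $\alpha_4$, two $\alpha_2$ or $\alpha_3$). I would first compute the smeared operator to second order in the spread of $w$, where it is $\obs{P}_{\mathbf{n},0}$ plus a small isotropic correction. Then I would use (A2) to exclude band crossings. Once admissibility holds, $v$ is a non-contextual admissible $\{0,1\}$ assignment on a Kochen-Specker set, contradicting the sharp Kochen-Specker theorem. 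Therefore no consistent approximate assignment exists.
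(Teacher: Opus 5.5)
Your route is the one the paper attributes to Breuer: use Assumption~\ref{assumption:breuer}, get simultaneous diagonalization on each triad, label eigenvalues by the $\delta$-bands, and reduce to the sharp theorem. The step you yourself single out as the main obstacle is, however, not secured by your plan.

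First, the ``consequently'' at the end of your first step does not follow. Axial invariance only makes each $F^{\mathbf{n},\epsilon}(i)$ diagonal in the eigenbasis of its own $\mathbf{n}\cdot\mathbf{S}$, and that is equally true for $i=\pm1$. Yet $F^{\mathbf{x},\epsilon}(1)$ and $F^{\mathbf{z},\epsilon}(1)$ do not commute. In Breuer's labelling $F^{\mathbf{z},\epsilon}(1)$ has eigenvalue $\alpha_1$ on $\ket{1_z}$ and $\alpha_3$ on $\ket{-1_z}$, so it contains the term $\tfrac{\alpha_1-\alpha_3}{2}S_z$, and (A2) forces $\alpha_1\neq\alpha_3$. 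What makes the three $i=0$ effects of a triad commute is the extra degeneracy of $F^{\mathbf{n},\epsilon}(0)$ on $\ket{\pm1_{\mathbf{n}}}$, i.e.\ that it is a function of $\obs{P}_{\mathbf{n},0}$.

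Second, a second-order expansion in the spread of $w$, followed by using (A2) to exclude ``band crossings'', cannot deliver this. (A2) constrains each spectrum separately, not commutators. An approximate computation yields at best approximate commutation, and then there is no joint eigenbasis on which to read off any pattern.

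The missing observation is that nothing is perturbative here. In the Cartesian realization $(S_k)_{ab}=-i\varepsilon_{kab}$ one has $\obs{P}_{\mathbf{m},0}=I-(\mathbf{m}\cdot\mathbf{S})^2=\mathbf{m}\mathbf{m}^{T}$, which is quadratic in the unit vector $\mathbf{m}$. So $F^{\mathbf{n},\epsilon}(0)=\int w_{\mathbf{n},\epsilon}(\mathbf{m})\,\mathbf{m}\mathbf{m}^{T}\,d\mathbf{m}$ is exactly the second-moment tensor of $w_{\mathbf{n},\epsilon}$. Axial symmetry about $\mathbf{n}$ (from (A1)) forces this tensor to be $\alpha_4\,\mathbf{n}\mathbf{n}^{T}+\alpha_2\,(I-\mathbf{n}\mathbf{n}^{T})$, with $\alpha_4=\langle\cos^2\theta\rangle$ and $\alpha_2=\tfrac12\langle\sin^2\theta\rangle$. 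Hence, exactly,
$$F^{\mathbf{n},\epsilon}(0)=\alpha_2 I+(\alpha_4-\alpha_2)\obs{P}_{\mathbf{n},0},\qquad \alpha_4+2\alpha_2=1.$$
On a triad the three effects are therefore diagonal in $\{\mathbf{n}_1,\mathbf{n}_2,\mathbf{n}_3\}$. Their joint eigenvalue triples are the permutations of $(\alpha_4,\alpha_2,\alpha_2)$, which settles your hedge: $\alpha_1$ and $\alpha_3$ never occur. Moreover $\sum_j F^{\mathbf{n}_j,\epsilon}(0)=I$.

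You must still state the constraint a non-contextual assignment obeys on commuting effects. One option is that they receive a joint eigenvalue. The other is the sum rule applied to $\sum_j F^{\mathbf{n}_j,\epsilon}(0)=I$, which forces exactly one $\alpha_4$, since (A2) with $\delta<0.5$ gives $\alpha_4>\alpha_2$. Without such a rule, eigenvalues chosen operator by operator could make every member of a triad ``almost true'', and nothing would transfer to the sharp theorem. With these two repairs your argument is complete.
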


As said, in addition to these assumptions, Breuer used Assumption~\ref{assumption:breuer} in his proof, discussed below.

In a subsequent article~\cite{Breuer2003AnotherNoGoTheoremForHiddenVariableModelsOfInaccurateSpin1Measurements}, Breuer avoids approximate truth values.
Instead, he uses four numbers as labels: the four eigenvalues $\alpha_j$ shared by the three unsharp operators.
In this version of his theorem, Breuer makes the very mild assumption that some eigenvalues are distinct. His newer theorem holds under weaker assumptions about measurement inaccuracy, but it still relies on Assumption~\ref{assumption:breuer}.

\begin{theorem}[Breuer's Unsharp Kochen-Specker Theorem, v. 2]
\label{thm:U-KS-breuer-second}
If the densities of apparatus misalignments transform
covariantly, $w_{R\mathbf{n},\epsilon}\(R\mathbf{m}\)=w_{\mathbf{n},\epsilon}\(\mathbf{m}\)$, and if the density of misalignments is such that both $\alpha_1$ and $\alpha_4$ are different from both $\alpha_2$ and $\alpha_3$, then there is a finite set of intended measurement directions for which not all results of inaccurate measurements can be predetermined in a non-contextual way.
\end{theorem}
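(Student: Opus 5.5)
The plan is to reduce the unsharp statement to the sharp Kochen--Specker theorem for spin-$1$, using covariance together with the eigenvalue separation. I will use Assumption~\ref{assumption:breuer} as Breuer does, writing
\[
F^{\mathbf{n},\epsilon}(i)=\int w_{\mathbf{n},\epsilon}(\mathbf{m})\,\obs{P}_{\mathbf{m},i}\,d\mathbf{m}.
\]

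\textbf{Step 1: structure of the unsharp operators.} Covariance, $w_{R\mathbf{n},\epsilon}(R\mathbf{m})=w_{\mathbf{n},\epsilon}(\mathbf{m})$, implies that the density for $\mathbf{n}$ is invariant under rotations about $\mathbf{n}$. Averaging over these rotations commutes with the spin component along $\mathbf{n}$. Hence each $F^{\mathbf{n},\epsilon}(i)$ commutes with $\obs{S}_{\mathbf{n}}$ and is diagonal in the eigenbasis $\{\ket{\mathbf{n},j}\}_{j=-1,0,1}$ of $\obs{S}_{\mathbf{n}}$. Moreover $F^{R\mathbf{n},\epsilon}(i)=U_R F^{\mathbf{n},\epsilon}(i)U_R^\dagger$, so the spectra do not depend on $\mathbf{n}$. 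I will then compute the diagonal entries explicitly. The symmetry $\mathbf{m}\mapsto-\mathbf{m}$, $i\mapsto-i$ shows that the three operators together use only four numbers $\alpha_1,\dots,\alpha_4$. Concretely, $F(\pm1)$ has entry $\alpha_1$ on $\ket{\mathbf{n},\pm1}$, $\alpha_2$ on $\ket{\mathbf{n},0}$ and $\alpha_3$ on $\ket{\mathbf{n},\mp1}$. $F(0)$ has entry $\alpha_4$ on $\ket{\mathbf{n},0}$ and $\alpha_2$ on $\ket{\mathbf{n},\pm1}$. This bookkeeping has to be checked against the labelling in the hypothesis.

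\textbf{Step 2: from predetermined unsharp outcomes to sharp labels.} Suppose a non-contextual hidden-variable model predetermines, for every intended direction $\mathbf{n}$ in a finite set $D$, which eigenvalue label of the commuting family $\{F^{\mathbf{n},\epsilon}(i)\}_i$ is realised. Here I follow Breuer's second paper, where the values assigned are the $\alpha_j$ and not the outcomes $i$. By Step 1, a joint eigenvector of the family is a vector $\ket{\mathbf{n},j}$, and its label vector $(F(-1),F(0),F(1))$ is $(\alpha_3,\alpha_2,\alpha_1)$, $(\alpha_2,\alpha_4,\alpha_2)$ or $(\alpha_1,\alpha_2,\alpha_3)$.

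The hypothesis that $\alpha_1,\alpha_4$ both differ from $\alpha_2,\alpha_3$ is used exactly here. It makes these three label vectors pairwise distinct. It also makes the value of $F^{\mathbf{n},\epsilon}(0)$ alone decide whether $j=0$, namely exactly when the value is $\alpha_4$. So the model induces a non-contextual $\{0,1\}$-valuation $v(\obs{P}_{\mathbf{n},0})$ on the sharp projectors $\obs{P}_{\mathbf{n},0}$.

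\textbf{Step 3: admissibility and the sharp contradiction.} For an orthogonal triad $\mathbf{x},\mathbf{y},\mathbf{z}$, the projectors $\obs{P}_{\mathbf{x},0},\obs{P}_{\mathbf{y},0},\obs{P}_{\mathbf{z},0}$ are the projectors onto an orthonormal basis, since $\obs{S}_{\mathbf{x}}^2+\obs{S}_{\mathbf{y}}^2+\obs{S}_{\mathbf{z}}^2=2$. The unsharp operators belonging to the triad commute pairwise, because each is a function of the commuting $\obs{S}^2_{\bullet}$ and $\obs{S}_\bullet$ of its own axis. Using the functional-relation rule~\eqref{eq:fvB-gvC} for commuting observables, the induced valuation must assign $1$ to exactly one projector in each triad. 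I then choose $D$ to be a standard finite Kochen--Specker set of directions, such as Peres' 33 directions or the original 117. No $\{0,1\}$-valuation with exactly one $1$ per triad exists on such a set, so the assumed model cannot exist.

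\textbf{Main obstacle.} The hard step is Step~3. Because the $F$'s are unsharp, the triad constraint ``exactly one $1$'' is not automatic. It has to be derived from the joint spectral structure of the commuting unsharp operators of different axes, and each joint eigenvalue combination has to be shown to map to a permitted sharp pattern. I will first attempt this by explicit simultaneous diagonalisation in the common eigenbasis $\{\ket{\mathbf{x},0},\ket{\mathbf{y},0},\ket{\mathbf{z},0}\}$, using the $\alpha$-inequalities to exclude mixed label patterns.
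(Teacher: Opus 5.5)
The paper gives no proof of this statement. It quotes Breuer's theorem and only remarks that Breuer's argument rests on Assumption~\ref{assumption:breuer}, i.e.\ on the POVM being a smeared PVM. Your Steps 1 and 2 use exactly that assumption and are sound. The spectral bookkeeping is right: with $\theta$ the angle between $\mathbf{m}$ and $\mathbf{n}$,
$\alpha_1=\int w\,\bigl(\tfrac{1+\cos\theta}{2}\bigr)^2$, $\alpha_3=\int w\,\bigl(\tfrac{1-\cos\theta}{2}\bigr)^2$, $\alpha_2=\int w\,\tfrac{\sin^2\theta}{2}$ and $\alpha_4=\int w\cos^2\theta$. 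One correction: the fact that the same $\alpha_2$ appears in $F(0)$ and in $F(\pm1)$ comes from this explicit computation, not from the $\mathbf{m}\mapsto-\mathbf{m}$ symmetry.

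The genuine gap is in Step 3, which is the crux. Your claim that all unsharp operators of an orthogonal triad commute pairwise is false under the theorem's own hypotheses. Only the squares $\obs{S}_{\bullet}^2$ of orthogonal components commute, not the components themselves. From Step 1,
$F^{\mathbf{x},\epsilon}(\pm1)=\alpha_2\obs{I}+\bigl(\tfrac{\alpha_1+\alpha_3}{2}-\alpha_2\bigr)\obs{S}_{\mathbf{x}}^2\pm\tfrac{\alpha_1-\alpha_3}{2}\obs{S}_{\mathbf{x}}$,
and the linear term survives because $\alpha_1\neq\alpha_3$. Since $[\obs{S}_{\mathbf{x}},\obs{S}_{\mathbf{y}}^2]=i(\obs{S}_{\mathbf{y}}\obs{S}_{\mathbf{z}}+\obs{S}_{\mathbf{z}}\obs{S}_{\mathbf{y}})\neq 0$, the operator $F^{\mathbf{x},\epsilon}(\pm1)$ commutes neither with $F^{\mathbf{y},\epsilon}(0)$ nor with $F^{\mathbf{y},\epsilon}(\pm1)$.

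So the simultaneous diagonalisation you plan in $\{\ket{\mathbf{x},0},\ket{\mathbf{y},0},\ket{\mathbf{z},0}\}$ does not exist for the triad family. Quasi-linearity also gives you no cross-axis constraint involving the $F(\pm1)$.

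The missing observation is that none of this is needed:
\begin{itemize}
\item Step 1 gives $F^{\mathbf{n},\epsilon}(0)=\alpha_2\obs{I}+(\alpha_4-\alpha_2)\obs{P}_{\mathbf{n},0}$, so the three $F(0)$'s of a triad do commute.
\item Using $\obs{P}_{\mathbf{x},0}+\obs{P}_{\mathbf{y},0}+\obs{P}_{\mathbf{z},0}=\obs{I}$ and $2\alpha_2+\alpha_4=1$ (completeness on $\ket{\mathbf{n},0}$), they sum to $\obs{I}$.
\item If $k$ of their values equal $\alpha_4$ and the rest $\alpha_2$, the sum rule gives $k\alpha_4+(3-k)\alpha_2=\alpha_4+2\alpha_2$, i.e.\ $(k-1)(\alpha_4-\alpha_2)=0$, hence $k=1$.
\end{itemize}
This is exactly the sharp triad rule for your induced $v(\obs{P}_{\mathbf{n},0})$, and a Kochen--Specker set of directions then closes the argument. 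Applying the sum rule directly to the $F(0)$'s, rather than transferring values to the sharp projectors, has an advantage: the hidden-variable model only needs to assign values to unsharp effects of intended directions.

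Finally, the contradiction itself uses only $\alpha_4\neq\alpha_2$. The remaining separations, $\alpha_1\notin\{\alpha_2,\alpha_3\}$, do a different job. They make each effect's own value decide whether its outcome occurred: outcome $i$ holds iff $v(F^{\mathbf{n},\epsilon}(i))\in\{\alpha_1,\alpha_4\}$. That is what ties the eigenvalue labels to the ``results of inaccurate measurements'' in the statement, and you should say so explicitly in Step 2.
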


We will first try to justify Breuer's assumption, but then we will find it too restrictive.

Quantum physicists who studied measuring devices realized that all that can be observed in a Stern-Gerlach experiment is, in fact, the place where the Silver atom hits the metallic plate. This is the pointer state. We observe the pointer state and, from this, infer something about the state of the Silver atom. While the impact mark may be very small, we will call it ``macroscopic'' or ``classical'' because its observation does not require a quantum measuring device. Of course, it is a quantum state, not a classical one, but it behaves quite classically. Anyway, let us use the terms ``macroscopic'' or ``macrostate'' to refer to pointer states.

\begin{remark}
\label{rem:macro-distinct}
Macroscopically distinguishable quantum states are orthogonal.
\end{remark}

This entails that the pointer observables are always sharp.

Another observation, based on Ozawa Theorem about measurements of continuous or non-discrete observables discussed in Section~\S\ref{s:limits}, is the following:
\begin{remark}
\label{rem:spin-axis-inaccuracy}
There is always an inaccuracy in the direction of the axis along which we intend to measure the spin.
\end{remark}

Note that this also applies to photonic implementations or other implementations that require phase shifters to select the observable, since the phases are continuous parameters.

On the other hand, it was argued that Ozawa Theorem can be evaded by relative observables, for example, relative position~\cite{LoveridgeBusch2011MeasurementofQMOperatorsRevisited,Loveridge2020ARelationalPerspectiveOnTheWignerArakiYanaseTheorem} or, in our case, relative orientation in space.

Together, Remarks~\ref{rem:macro-distinct} and \ref{rem:spin-axis-inaccuracy} seem to justify Breuer's Assumption~\ref{assumption:breuer}.

Alas, as noticed by Breuer himself, Assumption~\ref{assumption:breuer} can be satisfied only if the POVM is commutative, that is, any two of its constitutive positive operators commute. Unsharp measurements whose statistics are realizable as smeared  (or fuzzy) sharp measurements are exactly the commutative POVMs~~\cite{Beneduci2016POVMandFellerMarkovKernels_commutative_POVM_IS_smeared_PVM,BuschLahtiPellonpaaYlinen2016QuantumMeasurement}.
This implies that 

\begin{remark}
\label{rem:assumption:breuer}
We cannot rely on Assumption~\ref{assumption:breuer} for unsharp measurements.
Most unsharp measurements are irreducible to smeared sharp measurements.
In most cases, there is no underlying sharp observable being fuzzified.
\end{remark}

\begin{remark}
\label{rem:assumption:breuer-mixtures}
Moreover, even when the POVM is commutative, and hence it gives the same probabilities as a sharp measurement of an unknown observable, it does not imply that the unsharp measurement \emph{really is} a sharp measurement of an unknown observable.
The reason is that there are more ways to interpret the same POVM as a probabilistic distribution over definite states, as the distinction between \emph{improper} and \emph{proper mixtures} shows (\cite{dEspagnat1976ConceptualFoundationsOfQuantumMechanics}, chapter 7.2, \cite{dEspagnat2006OnPhysicsAndPhilosophy}, chapter 8, \cite{dEspagnat2018VeiledRealityAnAnalysisOfPresentDayQuantumMechanicalConcepts}, chapter 7).
This makes it difficult to connect what was measured to what the pointer indicates, which is necessary to determine whether the pointer's state is due to value indefiniteness.
\end{remark}

In addition, other reasons prevent the use of the unsharp Kochen-Specker Theorems proved by Breuer to guarantee value indefiniteness.
In Breuer First Theorem~\ref{thm:U-KS-breuer}, even if the state is an eigenvector of $F^{\mathbf{n},\epsilon}(i)$, this does not guarantee the outcome. The ``almost true'' and ``almost false'' labeling is unobservable, so the value indefiniteness of the labeling has no determinate implication for the resulting outcome. In particular, it cannot be used to guarantee the quality of the numbers produced by a QRNG.
In the second theorem given by Breuer, the state vectors are labeled as functions of the outcome and of the eigenvalues they have under the corresponding unsharp operator. But again, being an eigenvector of $F^{\mathbf{n},\epsilon}(i)$ does not guarantee the outcome. Even under Assumption~\ref{assumption:breuer}, the outcome cannot tell us the actual sharp observable that was measured, and so we cannot tell us what eigenstate and what eigenvalue were really observed. Also, the $4$ possible eigenvalues are shared among $3$ operators.
So even if they were pre-labeled, would this determine the outcome? If the label is $\alpha_1$, $\alpha_2$, or $\alpha_3$, we can't know if the outcome is 1 or -1. Only if it is $\alpha_4$ can we know that the outcome is $0$. Therefore, even if Breuer Theorems are an important generalization of the Kochen-Specker Theorem, it seems that we cannot use them to certify randomness. 

On the other hand, unsharp observables that cannot be understood as sharp measurements of unknown sharp observables already exhibit non-classicality, which can be understood as contextuality (Spekkens' ``generalized contextuality'')~\cite{Spekkens2005ContextualityForPreparationsTransformationsAndUnsharpMeasurements,Kunjwal2019BeyondTheCabelloSeveriniWinterFrameworkMakingSenseOfContextualityWithoutSharpnessOfMeasurements}.
This is precisely because the unsharp observable includes non-commuting positive operators.
Interestingly, this understanding can reveal contextuality in single unsharp measurements and even in two-dimensional Hilbert spaces, for example, by treating the qubit SIC-POVM (symmetric, informationally complete, positive operator-valued measure) as an unsharp observable.
Therefore, a deeper analysis may show that even in the unsharp case, the obtained pointer states arise from value indefiniteness, thereby complementing Breuer's results.

%%%%%%%%%%%%%%%%%%%%%%%%%%%%%%%%%%%%%%%%%%%%%%%%%%%%
\section{The Physicality of the Stinespring Dilation}
\label{s:dilation-physical}
In this article, we will use the Stinespring dilation to show that even if the measurement used by the QRNG is unsharp, it can be understood as a sharp measurement in a larger Hilbert space. This may seem like a mathematical trick in which we invoke a larger Hilbert space conveniently imagined for our purposes. Still, in this section, we will show that the larger Hilbert space has a solid physical basis. The Stinespring dilation is not in an imagined Hilbert space but is already realized concretely and physically in the experimental setup. The usual way we understand measurements as being about the observed system obfuscates this fact, restricting the larger Hilbert space and leading to unsharpness, and weakening the connection between the value-indefinite properties and what the pointer indicates.

To this end, we will recall Ozawa's analysis of measurements and see that it concerns the observer-plus-pointer system as a whole.
Ozawa's analysis reveals that the observable we intend to measure may be different from what we really measure, but even if it is unsharp, the combined observable, the one actually taking place, is sharp. It only seems unsharp if we ignore the larger system and trace out over its degrees of freedom, as is usually done when discussing quantum measurements. Therefore, it is physically justified to use the Stinespring dilation, as it actually occurs in the total system.
We include an extension of the unsharp Located Kochen-Specker Theorem, which is more general than Breuer's, in that it doesn't assume only the restrictive kind of unsharpness, and it ensures that the observed values are indefinite, just by applying the sharp Located Kochen-Specker Theorem to the total system.

Let us start by stating the obvious objection that, by using the Stinespring dilation, we dilate in a fictitious extension of the Hilbert space of the physical systems under consideration.
\begin{objection}
\label{obj:stinespring-abstract}
The construction of the dilation is theoretical, and to be of help in practice, it must be realized physically, because it makes no sense to invoke nonexistent contexts.
\end{objection}

Fortunately, the Stinespring dilation is not only justified but, in fact, how things happen in reality. The most realistic description of a measuring device is based on the Stinespring dilation. In the following, we will explain this. 
We will show that real-world measurements automatically realize the Stinespring dilation, which is known as the \emph{measurement dilation}. 

Let us remember the question ``what does a Stern-Gerlach device measure?''
At first sight, the obvious answer is that it measures the spin of the observed system, which may be, for example, a Silver atom or a neutron. But taking a closer look revealed that it measures the position where the observed Silver atom hits the metallic plate.

Recall that Wigner examined the spin measurement and concluded that such a measurement cannot be both accurate and repeatable (that is, undisturbing) \cite{Wigner1952MessungQMOperatorenPBusch2010EnTranslation}. He then showed that, if the measuring device contained an infinite number of particles, the ideal of a measurement would be attainable in principle. He also proposed other ways to distinguish between the two possible spin values by extending the set of possible outcomes to include an ``error'' outcome and discarding such results. Unfortunately, this idea cannot be used in practice to discard outputs (in particular, outputs of the QRNG) that are caused by noise or experimental error, since this would require knowing which unsharp observable is produced as a result of their effects and building the measuring device accordingly.

But this situation is not at all that bad for us, as we shall see.

Remark~\ref{rem:macro-distinct} is of extreme importance to understand what observable is actually measured in a quantum measurement. We may \emph{intend} to measure a particular observable, and design and build a device to do this job, but ultimately, we have to track back from the pointer state to infer what the actual observable was. This is explained for example in \cite{Ozawa1984QuantumMeasuringProcessesOfContinuousObservables}, \cite{BuschLahtiPellonpaaYlinen2016QuantumMeasurement} Section \S{7.7}, and \cite{Loveridge2020ARelationalPerspectiveOnTheWignerArakiYanaseTheorem} Section \S{2.2}.
We evolve the combined state backward in time, including the measuring device (and, if needed, the environment) and the observed system, and then trace out the measuring device. What remains is the actually measured observable of the system under observation.

In \cite{Ozawa1984QuantumMeasuringProcessesOfContinuousObservables}, Ozawa showed that a measuring device can be constructed, abstractly, for any observable $\pobs{E}$. His construction, called \emph{measurement dilation}, is based on \emph{Na\u{\i}mark Dilation Theorem} \cite{Naimark1940AboutSecondKindSelfAdjointExtensionsOfSymmetricalOperator}, which can be seen as a particular case of \emph{Stinespring Dilation Theorem} \cite{Stinespring1955PositiveFunctionsOnC-star-algebras}.

Let the observed system be $S$, and the measuring device be $A$, with the respective Hilbert spaces $\hilbert_S$ and $\hilbert_A$, and let $\hilbert=\hilbert_S\otimes \hilbert_A$. Let $\obs{Z}\in\mc{L}\(\hilbert_A\)$ be the pointer observable, taken to be a Hermitian operator because we assume it to be sharp, based on Remark~\ref{rem:macro-distinct}.
Let us detail this for the measurement of an $N$-dimensional system, assuming that we measure the observable $\obs{O}=\sum_{i=1}^N\lambda_i \obs{P}_{\lambda_i}$ as in Section~\S{\ref{s:FirstSecondMeasurement}}, where $\ket{\lambda_i}_S$ are the eigenstates of $\obs{O}$ and $\obs{P}_{\lambda_i}=\ket{\lambda_i}_S\bra{\lambda_i}_S$.
Let $\ket{\text{ready}}_A$ and $\ket{\zeta_1}_A,\ldots,\ket{\zeta_N}_A$ be the eigenvectors of the pointer state.
Let $\ket{\psi}_S=\sum_{i=1}^N a_i \ket{\lambda_i}_S$ be the initial state of the observed system.
Then, the pre-measurement consists of entangling the pointer of the measuring device with the observed system, by applying the evolution operator $\obs{U}$:
\begin{equation}
\label{eq:pre-measurement}
\ket{\psi}_S\otimes\ket{\text{ready}}_A\stackrel{\obs{U}}{\mapsto}\sum_{i=1}^N a_i \ket{\lambda_i}_S\otimes\ket{\zeta_i}_A,
\end{equation}
where $a_i=\braket{\lambda_i}{\psi}_S$.

But this is true if the measurement is sharp. In general, this is not the case, and this is precisely the subject of this article.

The pointer is a subsystem of the measuring device, but an observable of a subsystem is also an observable of the full system; we will not need to refine this further in the following. Similarly, on the total system $S+A$, the pointer observable is $\obs{I}_{S}\otimes\obs{Z}$.
The environment, including other systems that may affect our measurements, must also be considered. For simplicity, we will consider it included in the measuring device, since it indeed affects the measurement. Again, there is no need to treat it separately, and this is not even the right way, because we will describe a measurement that is not necessarily ideal.
We assume that the interaction between the measuring device and the observed system begins at time $t_0$, and that the pointer indicates the outcome starting at time $t_1 > t_0$, when the measurement ends.
Let $\uop{U}_{t_1,t_0}$ denote the unitary evolution operator of the combined system during the measurement.

Let $\obs{\varrho}\in\mc{T}\(\hilbert_A\)$ be the ``ready'' state of the measuring device, and $\obs{\rho}\in\mc{T}\(\hilbert_S\)$ the state of the observed system, at $t=t_0$. Let the measured observable be the \emph{positive operator-valued measure} (POVM) $\pobs{E}:\mc{E}\to\mc{L}\(\hilbert_S\)$, where $\mc{E}$ is a $\sigma$-algebra of subsets of a set $\Omega$, representing the space of possible outcomes resulting from the measurement of $\pobs{E}$. The POVM $\pobs{E}$ should satisfy the following \emph{probability reproducibility condition} \cite{Busch2009NoInformationWithoutDisturbance,BuschLahtiPellonpaaYlinen2016QuantumMeasurement}:
\begin{equation}
\label{eq:prob-reprod-cond}
\tr\(\rho\pobs{E}\(X\)\)=\tr\(\uop{U}_{t_1,t_0}\obs{\rho}\otimes\obs{\varrho}\uop{U}_{t_1,t_0}^\dagger\obs{I}_{S}\otimes\pobs{Z}(X)\),
\end{equation}
where $X\in\mc{E}$ and $\pobs{Z}$ is the sharp observable $\obs{Z}(X)$ seen as a POVM.
Let us assume that the ``ready'' state of the pointer is pure, $\obs{\varrho}=\oprod{\phi}{\phi}$, where $\ket{\phi}\in\hilbert_A$. We define the isometric embedding
\begin{equation}
\label{eq:isometric-embedding}
V_{\phi}:\hilbert_S\to\hilbert_S\otimes\hilbert_A, V_{\phi}\ket{\psi}=\ket{\psi}\otimes\ket{\phi},
\end{equation}
for any $\ket{\phi}\in\hilbert_S$.
Then, $\pobs{E}$ can be 
uniquely extracted  from equation \eqref{eq:prob-reprod-cond},
\begin{equation}
\label{eq:measurement-dilation}
\pobs{E}\(X\)=V_{\phi}^\dagger\uop{U}_{t_1,t_0}\(\obs{I}_{S}\otimes\pobs{Z}(X)\)\uop{U}_{t_1,t_0}V_{\phi}.
\end{equation}

This is a particular case of the Stinespring construction for the observable $\pobs{E}$, realized as the pointer observable $\obs{I}_{S}\otimes\pobs{Z}(X)$, evolved back in time to $t_0$. It is called the \emph{measurement dilation}.

\begin{remark}[Practical realizability of sharp measurements]
\label{rem:measurement-dilation-b}
In practice, it is difficult, if not impossible, to build an ideal measuring device for an intended observable $\pobs{E}$, so in fact equation \eqref{eq:measurement-dilation} should be understood rather as a definition of what observable a particular measuring device measures.
\end{remark}

\begin{remark}[Unsharpness of the measured observable]
\label{rem:measurement-unsharpness}
Equation~\eqref{eq:measurement-dilation} implies that, in fact, the measurements are usually unsharp when understood as being about the observed system.
Even though Remark~\ref{rem:macro-distinct} implies that the pointer observable itself is sharp, the pointer state indicates \emph{an unsharp observable $\pobs{E}$ of the observed system} $S$, because of the restriction done by the isometric embedding $V_{\phi}:\hilbert_S\to\hilbert_S\otimes\hilbert_A$ from equation~\eqref{eq:isometric-embedding}.
\end{remark}

\begin{remark}[The physicality of the Stinespring dilation]
\label{rem:measurement-dilation}
At the same time, the isometric embedding $V_{\phi}:\hilbert_S\to\hilbert_S\otimes\hilbert_A$ from equation~\eqref{eq:isometric-embedding} and taking the trace in equation~\eqref{eq:prob-reprod-cond} do not represent physical processes, but they describe the observer's inference about the observed system $S$, and the observer's \emph{ignorance} of the total state. In reality, the pointer observable corresponds to the observable $\uop{U}_{t_1,t_0}\(\obs{I}_{S}\otimes\pobs{Z}(X)\)\uop{U}_{t_1,t_0}$, which, being a unitary transformation of the sharp pointer observable $\obs{I}_{S}\otimes\pobs{Z}(X)$, it is sharp too.
This is important because it shows that the Stinespring dilation is realized in practice in every quantum measurement.
It does not have to be realized in an imagined auxiliary Hilbert space, because it is already realized physically. Still, the standard descriptions of the measurement restrict what the pointer state tells us to the observed system $S$.
\end{remark}

%%%%%%%%%%%%%%%%%%%%%%%%%%%%%%%%%%%%%%%%%%%%%%%%%%%%
\section{The Problem of the Overlap of Unsharp Outcomes}
\label{s:caution-overlap}
Another potential objection to using the Stinespring Dilation Theorem, due to the unsharpness, is that
\begin{objection}
\label{obj:overlap}
The unsharpness of the measurements leads to overlap between the probability distributions of the possible outcomes, and we need to ensure it does not compromise the quality of randomness. 
\end{objection}

This overlap is, in fact, because in practice the measurements are unsharp, but this is true only if we consider the observable to be restricted to the observed system, and not to be an observable of the whole system.

As an extreme example of such an overlap, note that, for an ideal 3D-Photonic QRNG, random numbers are generated. Still, the total unprojected state vector is determined by the unitary evolution and therefore does not exhibit randomness. This means that the randomness of the three alternative outcomes cancels out in the unprojected state, which is what we get if we lose track of which outcome obtains, for example, by using a single detector for all three possible modes. This corresponds to replacing the projective measurement in the 3D-Photonic QRNG with the trivial POVM consisting solely of the identity operator. While this is an extreme example that completely washes out the value of indefiniteness, it illustrates the problem posed by unsharpness.

However, once we understood that the pointer observable is sharp and that the observed system is used to put the pointer state into various mutually orthogonal states, we realized that we do not actually care about the real state of the observed system. Our random numbers are generated by the detectors in response to the photons, not by the photons themselves. %\cristi{
It can be argued that since the detectors have to reset between measurements, this prevents performing the measurements alongside other compatible measurements. Therefore, one cannot speak of the Kochen-Specker Theorem in this case~\cite{GrudkaKurzynski2008IsThereContextualityForASingleQubit}. This is true if we want to test the theorem, for example, for violations of inequalities. But all we need is to ensure value-indefiniteness for a single measurement, guaranteed by counterfactual, not by actual measurements. For this reason, the failure of detectors to reset exactly in the same state does not affect value indefiniteness, because alternative observables needed for this exist in principle in the extended Hilbert space, which is physical, even if they are not \emph{de facto} measured. Another argument against POVM contextuality~\cite {GrudkaKurzynski2008IsThereContextualityForASingleQubit} is that the dilation is not unique. However, in our case, we refer to Ozawa's measurement dilation, which is physical and uniquely given by the pointer observable. So we can focus only on the pointer observable, and apply the Localized Kochen-Specker Theorem to its observation, and not merely to the measurement of the observed system.

In the Ozawa measurement dilation, the observable measured may be unsharp, but the total system undergoes a sharp measurement; see equation~\eqref{eq:measurement-dilation} and its discussion.
 Note that a sharp measurement can still be performed on an impure state, and can still result in an impure state, see, for example, the case of the EPR experiment.
The reason is that a reduced density operator can also be projected, and if the observable is a degenerate operator, the state may remain impure.
Hence, Ozawa's measurement dilation does not guarantee a pure state.
We assume the existence of a larger system that includes the observed system and the pointer, and that the larger system is in a pure state. 
Then, if the observed system was not in the same state as the one resulting from the pre-measurement, $\abs{\braket{\Psi}{\Phi}}<1$. At the same time, since $\ket{\Phi}$ is obtained by a non-trivial projection from $\ket{\Psi}$, the inequality $\abs{\braket{\Psi}{\Phi}}>0$ also holds.

We now list some assumptions that will allow us to generalize Theorem~\ref{thm:LKS}, and explain why they are automatically satisfied.
\begin{enumerate}
	\item 
There exists, physically, a larger system that includes the observed system and the pointer, and whose total state is pure before and after the measurement. This makes sense because in most formulations of quantum mechanics, at least the total state of the universe is assumed to be a pure state. We denote by $\hilbert_A$ the Hilbert space of the pointer, by $\hilbert_S$ the Hilbert space of the observed system and whatever environment needs to be included, and by $\hilbert=\hilbert_S\otimes \hilbert_A$ the total Hilbert space.
	\item
The preparation and pre-measurement result in a total state of the combined system represented by the state vector $\ket{\Psi}$.
	\item Immediately after the measurement, the system can be found with nontrivial probabilities in at least two macroscopically distinct states $\ket{\Phi}$ and $\ket{\Phi'}$, both eigenstates of the extended pointer observable $\obs{I}_{S}\otimes\obs{Z}$.
\end{enumerate}

The last condition shows that a nontrivial projection occurred.
The reason is that, from the above assumptions, $\ket{\Psi}=a\ket{\Phi}+a'\ket{\Phi'}+\ldots$, where $\ket{\Phi},\ket{\Phi'}\ldots$ are mutually orthogonal, being macroscopically distinct, and at least $a=\braket{\Phi}{\Psi}$ and $a'=\braket{\Phi'}{\Psi}$ are not $0$. Then, this implies that $\abs{\braket{\Phi}{\Psi}}<1$, and from
\begin{equation}
\label{eq:Psi_Phi_nontrivial-proof}
1=\abs{\braket{\Psi}{\Psi}}^2=\abs{\braket{\Psi}{\Phi}}^2+\abs{\braket{\Psi}{\Phi'}}^2+\ldots>\abs{\braket{\Psi}{\Phi}}^2
\end{equation}
one obtains
\begin{equation}
\label{eq:Psi_Phi_nontrivial}
0<\abs{\braket{\Psi}{\Phi}}<1.
\end{equation}

For more clarity, let us see how this works in the case of a sharp measurement. For sharp measurements, $\ket{\Psi}=\sum_{i=1}^N a_i \ket{\lambda_i}_S\otimes\ket{\zeta_i}_A$ and $\ket{\Phi}=\ket{\lambda_k}_S\otimes\ket{\zeta_k}_A$ as in equation~{\eqref{eq:pre-measurement}}. Then, a way to ensure $0<\abs{\braket{\Psi}{\Phi}}<1$ is if $a_k=\braket{\lambda_k}{\psi}\neq 0$ and at least another component $a_j\neq 0$, where $j\neq k$. Then, $\ket{\Phi'}=\ket{\lambda_j}_S\otimes\ket{\zeta_j}_A$. This is what happens in the sharp version, Theorem~{\ref{thm:LKS}}. But the same condition \eqref{eq:Psi_Phi_nontrivial} can be ensured regardless of the state of the observed system, because $\braket{\zeta_k}{\zeta_j}_A\neq 0$, which is satisfied because $\ket{\zeta_k}$ and $\ket{\zeta_j}$ represent macroscopically distinct states. In other words, the same value of indefiniteness obtained by applying Theorem~{\ref{thm:LKS}} to the states of the observed system can be obtained by applying Theorem~{\ref{thm:LKS}} to the states of the pointer.
But this condition is satisfied by the pointer states even if the measurement is not sharp, because even in this case the macroscopically distinct pointer states $\ket{\Phi}$ and $\ket{\Phi'}$ are orthogonal, but $\ket{\Psi}$ is a nontrivial linear combination of them. Then, the condition $0<\abs{\braket{\Psi}{\Phi}}<1$ can be validated simply by verifying empirically that the measurement sometimes yields $\ket{\Phi}$, and sometimes yields $\ket{\Phi'}$.

Under these assumptions, we can prove the following result by simply applying Theorem~\ref{thm:LKS} to the pointer states.
\begin{theorem}[An unsharp Located Kochen-Specker Theorem]
\label{thm-LKS-unsharp-general}
For possibly unsharp measurements, assuming that identically prepared systems result in nontrivial probabilities in the macroscopically distinct total pointer observables $\ket{\Phi}$ and $\ket{\Phi'}$, the observable $\obs{P}_{\Phi}$ is value indefinite.
\end{theorem}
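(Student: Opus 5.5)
The plan is to reduce Theorem~\ref{thm-LKS-unsharp-general} to the sharp Theorem~\ref{thm:LKS} by applying the latter in the total Hilbert space $\hilbert=\hilbert_S\otimes\hilbert_A$ rather than in $\hilbert_S$. By the listed assumptions and Remark~\ref{rem:measurement-dilation}, the physically realized observable is the sharp extended pointer observable $\obs{I}_S\otimes\obs{Z}$. Equivalently, evolved back to $t_0$, it is $\uop{U}_{t_1,t_0}^\dagger(\obs{I}_S\otimes\pobs{Z}(X))\uop{U}_{t_1,t_0}$, which is again a projection-valued measure. The unsharpness of $\pobs{E}$ in equation~\eqref{eq:measurement-dilation} then lives only in the restriction by $V_\phi$, and plays no role once we work in $\hilbert$.

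\textbf{Step 1: set up the preparation state.} The preparation-plus-pre-measurement yields the pure total state $\ket{\Psi}$. I take the eigenstate principle to hold for $\obs{P}_\Psi$, since $\ket{\Psi}$ is the state in which the total system is prepared.

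\textbf{Step 2: produce the target observable.} Choose a unit vector $\ket{\Phi}$ in the eigenspace of $\obs{I}_S\otimes\obs{Z}$ for the observed outcome. The natural choice is the normalized projection of $\ket{\Psi}$ onto that eigenspace, so that $\obs{P}_\Phi$ is a one-dimensional projection observable in $\hilbert$.

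\textbf{Step 3: verify the overlap condition.} Expand $\ket{\Psi}=a\ket{\Phi}+a'\ket{\Phi'}+\ldots$ with mutually orthogonal components, using Remark~\ref{rem:macro-distinct}. Nontrivial probabilities give $a,a'\neq 0$, and the norm computation~\eqref{eq:Psi_Phi_nontrivial-proof} gives $0<\abs{\braket{\Psi}{\Phi}}<1$, i.e.\ \eqref{eq:Psi_Phi_nontrivial}.

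\textbf{Step 4: apply the sharp theorem.} Theorem~\ref{thm:LKS} now applies to the pair $\ket{\Psi},\ket{\Phi}$. It supplies a finite set $\mathcal{O}$ of one-dimensional projections, containing $\obs{P}_\Psi$ and $\obs{P}_\Phi$, on which every admissible, non-contextual value assignment satisfying the eigenstate principle leaves $\obs{P}_\Phi$ value indefinite.

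The first technical point is dimension. Theorem~\ref{thm:LKS} is stated for $\mathbb{C}^n$ with $n\ge 3$, while $\hilbert$ may be larger or even infinite-dimensional. I would restrict to a finite-dimensional subspace $\mathcal{K}\subseteq\hilbert$ containing $\ket{\Psi}$, $\ket{\Phi}$, $\ket{\Phi'}$ and at least one further orthogonal direction, so that $\dim\mathcal{K}\ge 3$. I would then apply Theorem~\ref{thm:LKS} inside $\mathcal{K}\cong\mathbb{C}^n$. Since projections onto subspaces of $\mathcal{K}$ are also projections on $\hilbert$, and contexts in $\mathcal{K}$ extend to contexts in $\hilbert$ by adjoining an orthonormal basis of $\mathcal{K}^\perp$, admissibility on $\hilbert$ restricts to admissibility on $\mathcal{K}$. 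The conclusion therefore transfers back to $\hilbert$.

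The step I expect to be the main obstacle is justifying that the premises of Theorem~\ref{thm:LKS} legitimately apply at the level of the total system. This is a matter of interpretation rather than calculation, and three issues arise. First, the eigenstate principle must hold for $\ket{\Psi}$, a state we never prepare directly and only infer. Second, non-contextuality must hold for projections in $\mathcal{O}$ that are never actually measured. Third, the value assigned to $\obs{P}_\Phi$ must coincide with the detector's reading. I would address these using Remark~\ref{rem:measurement-dilation}: the dilation is Ozawa's physical, uniquely determined measurement dilation, and the counterfactual contexts exist in the physical extended space. I would also use the fact that $\obs{P}_\Phi$ is a subprojection of the pointer eigenprojection, so a definite value of $1$ for it would force the corresponding pointer outcome. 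If the premises are accepted in this form, the theorem follows immediately from Theorem~\ref{thm:LKS}.
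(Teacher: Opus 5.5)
Your proposal follows the paper's proof: the paper likewise derives $0<\abs{\braket{\Psi}{\Phi}}<1$ from the orthogonal expansion $\ket{\Psi}=a\ket{\Phi}+a'\ket{\Phi'}+\ldots$ with $a,a'\neq 0$, and then applies Theorem~\ref{thm:LKS} to the pointer states in the total Hilbert space. Your explicit choice of $\ket{\Phi}$ as the normalized projection of $\ket{\Psi}$ and your reduction to a finite-dimensional $\mathcal{K}$ are sensible additions the paper leaves out, but that reduction needs one repair: condition (b) of admissibility restricts from $\hilbert$ to $\mathcal{K}$ only if the adjoined projections onto $\mathcal{K}^\perp$ are assigned $0$, which follows from $v(\obs{P}_\Psi)=1$ and condition (a) applied to an extended context containing $\obs{P}_\Psi$.
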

\begin{proof}
As we have seen, these conditions ensure that $0<\abs{\braket{\Psi}{\Phi}}<1$.
Then, we can apply Theorem~\ref{thm:LKS} to the pointer states, from which it follows that the total observable $\obs{P}_{\Phi}$ is value indefinite.
\end{proof}

\begin{remark}
\label{rem:thm-LKS-unsharp-general}
Can we extend the conclusion of Theorem~{\ref{thm-LKS-unsharp-general}} to the reduced observable that we measure on the observed system?
We don't know, but we do not need it. The whole point of Theorem~{\ref{thm-LKS-unsharp-general}} was to avoid the necessity to extend value indefiniteness to unsharp observables.
This allows us to evade Objection~\ref{obj:overlap}.
Recall from Section~\S{\ref{s:dilation-physical}} that what we can observe is the pointer observable $\pobs{Z}(X)$ of the measuring device, and not directly an observable of the observed system.
The pointer observable $\pobs{Z}(X)$ is sharp and extends to the sharp observable $\obs{I}_{S}\otimes\pobs{Z}(X)$ on the total system, containing the observed system and the measuring device.
To obtain the measured observable, we apply backward the time evolution and obtain $\uop{U}_{t_1,t_0}\(\obs{I}_{S}\otimes\pobs{Z}(X)\)\uop{U}_{t_1,t_0}$ as in equation~{\eqref{eq:measurement-dilation}}.
To obtain the measured observable $\pobs{E}(X)$ of the system that we intended to observe, we apply to this the isometric embedding $V_{\phi}$ from equation~{\eqref{eq:isometric-embedding}} and its adjoint $V_{\phi}^\dagger$, as in equation~{\eqref{eq:measurement-dilation}}.
This is the inverse operation of the dilation.
If we are interested in the observed system, what we measure in fact is the observable $\pobs{E}(X)$, regardless of what observable we intended to measure.
But what we measure sharply is only the pointer observable $\obs{I}_{S}\otimes\pobs{Z}(X)$.
And this is perfect for our analysis of the unsharp measurements, since all measurements, sharp or unsharp, are modeled by the Ozawa measurement dilation described in Section~\S{\ref{s:dilation-physical}}.
To use the results of the measurement as random numbers, all we need is to ensure the value indefiniteness of the observable $\obs{P}_{\Phi}$, where $\ket{\Phi}$ is an eigenvector of $\obs{I}_{S}\otimes\pobs{Z}(X)$, and not of the observable $\pobs{E}(X)$. This is what we achieve in Theorem~{\ref{thm-LKS-unsharp-general}}. Whether or not this implies the value indefiniteness of the observable $\pobs{E}(X)$ is irrelevant, as long as we have shown that the projector associated with the pointer state is value indefinite. Hence, its measurement is a random digit.
\end{remark}

%%%%%%%%%%%%%%%%%%%%%%%%%%%%%%%%%%%%%%
\section{The Effect of Experimental Imperfection in the Distribution of Outcomes}

We illustrate the effect of experimental imperfection in the distribution of outcomes, and the solution based on the Stinespring dilation, for a 3D QRNG.

One of the main ways to implement beamsplitters in integrated photonics is to use \emph{multimode interferometers} (MMIs). Here, single-mode waveguides are coupled to a multimode fiber with a specified number of allowed modes. 
Moreover, phase modulation can be achieved by using a thermal phase shifter. 
For example, a thermal shifter may be implemented in silicon devices by connecting a resistive silicon strip adjacent to the waveguide to a metal pad, where a voltage is applied to control the temperature.

Thus, by integrating two MMIs (acting as balanced beamsplitters) with a thermal phase shifter, we may construct a \emph{Mach-Zehnder Interferometer} (MZI).

This solution offers two main advantages when implementing our device. 
\begin{enumerate}
    \item MMI's performance is close to that of an ideal balanced beamsplitter.
    \item A MZI characterized by a tunable phase-shifter can be mapped onto a beam splitter with tunable reflectivity. 
\end{enumerate}

Thus, the first advantage allows us to obtain the preparation state described by the first measurement operator, and, together with the second advantage, we can design a physically realizable arrangement of Mach-Zehnder interferometers that implements our desired unitary operator.

Thus, MZIs are a suitable element to model errors in a photonic implementation. An MZI is composed of two balanced beamsplitters and two phase shifters (one internal and one external),
\begin{equation}
\label{eq:MZI}
\begin{aligned}
\obs{M} &= 
\begin{pmatrix}\frac{1}{\sqrt{2}} & \frac{i}{\sqrt{2}} \\\frac{i}{\sqrt{2}} & \frac{1}{\sqrt{2}} \\ \end{pmatrix}
\begin{pmatrix}e^{i\phi_2} & 0 \\ 0 & 1  \\ \end{pmatrix}
\begin{pmatrix}\frac{1}{\sqrt{2}} & \frac{i}{\sqrt{2}} \\\frac{i}{\sqrt{2}} & \frac{1}{\sqrt{2}} \\ \end{pmatrix}
\begin{pmatrix}e^{i\phi_1} & 0 \\ 0 & 1  \\ \end{pmatrix}
\\
&=ie^{\frac{i\phi_2}{2}}
\begin{pmatrix}e^{i\phi_1}\sin{\frac{\phi_2}{2}} & \cos{\frac{\phi_2}{2}} \\
e^{i\phi_1}\cos{\frac{\phi_2}{2}} & -\sin{\frac{\phi_2}{2}} \\ 
\end{pmatrix}
\end{aligned}
\end{equation}

Thus, we may introduce loss by setting different angles to the beam splitters as follows:
\begin{equation}
\label{eq:MZI-loss}
\widetilde{\obs{M}} =
\begin{pmatrix}\cos{\beta} & i\sin{\beta} \\i\sin{\beta} & \cos{\beta} \\ \end{pmatrix}
\begin{pmatrix}e^{i\phi_2} & 0 \\ 0 & 1  \\ \end{pmatrix}
\begin{pmatrix}\cos{\alpha} & i\sin{\alpha} \\i\sin{\alpha} & \cos{\alpha} \\ \end{pmatrix}
\begin{pmatrix}e^{i\phi_1} & 0 \\ 0 & 1  \\ \end{pmatrix}
\end{equation}

Letting
\begin{equation}
\label{eq:fixed-params-xy}
\begin{cases}
x &= \cos{(\alpha -\beta)}\sin{(\frac{\phi_2}{2})} - i\cos{(\alpha +\beta)}\cos{(\frac{\phi_2}{2})} \\
y &= \sin{(\alpha +\beta)}\cos{(\frac{\phi_2}{2})} - i\sin{(\alpha +\beta)}\sin{(\frac{\phi_2}{2})} \\
\end{cases}
\end{equation}
we have

\begin{equation}
\label{eq:MZI-loss-fixed-params}
\widetilde{\obs{M}} = ie^{\frac{i\phi_2}{2}}
\begin{pmatrix} e^{i\phi_1}x & y \\
e^{i\phi_1}y & -x \\ 
\end{pmatrix}
.
\end{equation}

Similarly, the phase values will be imprecise in the real case scenario, so adjusting $\widetilde{\obs{M}}$ to $\widetilde{\obs{M}}(\phi_1',\phi_2')$ and solving for $\widetilde{\obs{M}}(\phi_1',\phi_2') = \obs{M}$ describes a restriction induced by errors. For example, for $\phi_2$ we get
\begin{equation}
\label{eq:phi_2-prime}
\phi_2' = 2\arcsin{\sqrt{
\frac{\sin^2(\frac{\phi_2}{2}) - \cos^2( \alpha + \beta )}
{\cos^2 ( \alpha - \beta ) - \cos^2(\alpha + \beta) }} }.
\end{equation}

That is,
\begin{equation}
\label{eq:ineq}
2\abs{\alpha + \beta - \frac{\pi}{2}} < \phi_2 < \pi - 2\abs{\alpha - \beta}.
\end{equation}

Note that this type of imperfections may be addressed by the use of additional phase shifters to obtain a Hermitian operator $\widetilde{\obs{U}}$ with high fidelity, without affecting the result from the previous sections. Now, there is an additional consideration we must take during the two-dimensional decomposition of our measurement operator.

For example, for the unitary matrix $\obs{U}_x$ from equation~\eqref{eq:second-measurement-operator}, in the 3-dimensional case we have $\obs{U}_x = \obs{B}_{1,2}^{-1}\cdot \obs{B}_{2,3} \cdot \obs{D} \cdot \obs{B}_{1,2}$ where
\begin{equation}
\label{eq:matrices-BD}
\begin{aligned}
\obs{B}_{1,2} &=
\begin{pmatrix}
\sqrt{\frac{2}{3}} & \frac{1}{\sqrt{3}} & 0\\
\frac{i}{\sqrt{3}} & -i\sqrt{\frac{2}{3}} & 0\\
0 & 0 & 1
\end{pmatrix},
&
\obs{B}_{2,3} &=
\begin{pmatrix}
1 & 0 & 0\\
0 & \frac{1}{2} & -\frac{i\sqrt{3}}{2}\\
0 & \frac{i\sqrt{3}}{2} & -\frac{1}{2}
\end{pmatrix}, \\
\obs{B}_{1,2}^{-1} &=
\begin{pmatrix}
\sqrt{\frac{2}{3}} & -\frac{i}{\sqrt{3}} & 0\\
\frac{1}{\sqrt{3}} & i\sqrt{\frac{2}{3}} & 0\\
0 & 0 & 1
\end{pmatrix},&
 \obs{D}&=\begin{pmatrix} 
1 & 0 & 0\\
0 & -1 & 0\\
0 & 0 & -1
\end{pmatrix}.\\
\end{aligned}
\end{equation}

These matrices are generated through Clements' unitary decomposition \cite{ClementsEtAl2016OptimalDesignForUniversalMultiportInterferometers}, \cite{Cilluffo2024ClementsUnitaryDecompositionInPractice}, where off diagonal elements of $\obs{U}_x$ are nulled by applying transformations  $\obs{T}_{m,n}$ by beam-splitters between channels $m$ and $n$ with $m=n-1$ such as the resulting $\obs{B}_{1,2},\obs{B}_{2,3}$ and $\obs{B}_{1,2}^{-1}$ matrices.
Nonetheless, in a practical setting, imperfections in the MZI interferometers used to carry out these transformations make it impossible to realize a perfect cross $(s=0)$ and bar $(s=\infty)$ states where $s$ is the splitting ratio. If $\theta$ is the ideal angle for each beam splitter in a MZI, then with $\theta + \sigma = \alpha$ and $ \theta + \rho = \beta$ we have that the splitting ratio is restricted by $\abs{\sigma + \rho} \leq \abs{s} \leq \cot{\abs{\sigma - \rho}}$ where $\sigma,\rho$ are the error offsets. In the bar state, the inputs connect directly to the outputs, while in the cross state, each input is directed to the opposite output. So, if a given nulling step falls within these ``forbidden regions, nulling is imperfect, and an off-diagonal residual prevents perfect diagonalization of the matrix, leading to an ``uncorrectable'' error. For example, the matrix $\obs{B}_{1,2}$ has a splitting ratio of $s = \obs{T}_{1,1}/\obs{T}_{1,2} = \left.\sqrt{\frac{2}{3}} \middle/ \frac{1}{\sqrt{3}}\right. = \sqrt{2}$ so it does not fall into a ``forbidden'' region. Nonetheless, this restriction on splitting ratios means that other unitarily equivalent choices of operators may offer very desirable properties at the expense of greater sensitivity to errors. 

Let 
\begin{equation}
\label{eq:matrices-AB}
\obs{A} =\frac{1}{\sqrt{2}}
\begin{pmatrix} 
\sqrt{2} & 0 & 0\\
0 & 1 & 1\\
0 & 1 & -1
\end{pmatrix}, \, \obs{B} =\frac{1}{2}
\begin{pmatrix} 
\sqrt{2} & \sqrt{2} & 0\\
1 & -1 & -\sqrt{2}\\
1 & 1 & -\sqrt{2}
\end{pmatrix}.
\end{equation}

Now consider the unitarily equivalent operator $\obs{U}'=\obs{A}\obs{B}$. We may implement this operator by concatenating two $\obs{U}_x$ interferometers and adjusting their angles and phases as in Table~\ref{ABtable}  (also see Figure~\ref{fig:arrangements}). We can use this table to see that its splitting ratios fall into a ``forbidden'' region.

\begin{table}[tbp]
    \centering
    \begin{tabular}{|r|c|c|}
    \hline
     & phase & angle \\
     \hline\hline
    $\obs{A}$ & 0 & $\frac{\pi}{2}$ \\
    \hline
     & $\pi$ & $\frac{\pi}{6}$ \\
     \hline
     & $-\pi$ & $\frac{\pi}{2}$ \\
     \hline\hline
    $\obs{B}$ & $\pi$ & $\frac{\pi}{6}$ \\
    \hline
     & $-\frac{\pi}{2}$ & $\frac{\pi}{2}$ \\
     \hline
     & 0 & $\frac{\pi}{6}$\\
     \hline
    \end{tabular}
    \caption{angles and phases for operators $\obs{A}$ and $\obs{B}$.}
    \label{ABtable}
\end{table}

\begin{figure}[tbp]
    \centering
    \includegraphics[width=\linewidth]{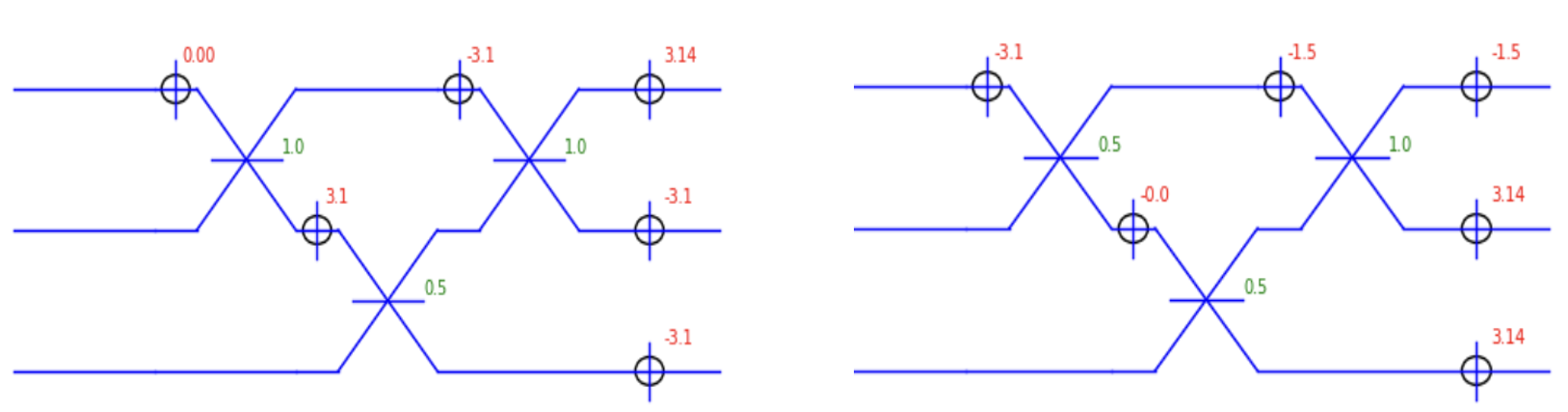}
    \caption{$\obs{A}$ and $\obs{B}$ arrangements.}
    \label{fig:arrangements}
\end{figure}

However, there are ways to mitigate this issue at the cost of additional complexity; for example, we may displace the cross and bar states by adding a splitter at the input, performing the transformation $s \to (s+i\tan \eta) / (1+ is \tan \eta)$. Alternatively, we may find a unitarily equivalent operator with the same output distribution to circumvent this issue.

We are now ready to examine the overall effect of errors on the outcome distribution.

Consider the completely positive map (CP-map)
\begin{equation}
\label{eq:CP-map}
\Phi: L(\mathcal{H})\to L(\mathcal{H}), \varrho:\mathcal{H}\to\mathcal{H},Tr(\varrho)=1,\varrho\geq0
\end{equation}
where $\varrho$ is a positive semi-definite operator acting on the Hilbert space $\mathcal{H}$ such that
\begin{equation}
\label{eq:Kraus}
\Phi(\varrho) = \sum_{i}\obs{K}_i\varrho \obs{K}_i^\dagger \text{ and }\sum_i \obs{K}_i^\dagger \obs{K}_i \leq I.
\end{equation}

Note that for $\varrho^2=\varrho$, we have the projection operator; that is, there exists a $\ket{\psi}\in\mathcal{H}$ such that $\varrho=\ket{\psi}\bra{\psi}$.

We call $\{\obs{K}_i\}$ a Kraus operator which need not be unitary (note that if it is unitary we have $\varrho\mapsto \obs{U}\varrho \obs{U}^\dagger$).

\begin{theorem}[Stinespring dilation \cite{Stinespring1955PositiveFunctionsOnC-star-algebras}]
Let $A$ be a $\mathcal{C}^*$-algebra, $\mathcal{H}$ be a Hilbert space and $B(\mathcal{H})$ be  bounded operator on $\mathcal{H}$. Then, for every CP-map $\Phi : A\to B(\mathcal{H})$ there exist a Hilbert space $\mathcal{K}$ and a unital *-homomorphism $\varphi: A\to B(\mathcal{K})$ such that $\Phi (a) = \obs{V}^* \varphi(a)\obs{V}$, where $\obs{V}:\mathcal{H}\to\mathcal{K}$ is a bounded operator.
\end{theorem}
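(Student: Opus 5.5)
The plan is the standard GNS-type construction. Here $\Phi:A\to B(\mathcal{H})$ is a completely positive map on a unital $\mathcal{C}^*$-algebra $A$; if $A$ is not unital we first pass to its unitization, and we may also assume $\Phi$ is bounded, which is automatic for CP maps. We build $\mathcal{K}$ out of the algebraic tensor product $A\odot\mathcal{H}$ by defining a sesquilinear form on it. On elementary tensors we set
\[
\left\langle a\otimes\xi , b\otimes\eta\right\rangle_{\Phi} := \left\langle \xi , \Phi(a^* b)\,\eta\right\rangle_{\mathcal{H}} ,
\]
and we extend this sesquilinearly. The first key step is to show that this form is positive semi-definite. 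Take a finite sum $\sum_{j=1}^n a_j\otimes\xi_j$. Its squared norm equals $\langle \vec\xi, [\Phi(a_i^*a_j)]_{i,j}\,\vec\xi\rangle$ in $\mathcal{H}^n$. The matrix $[a_i^*a_j]$ is positive in $M_n(A)$, because it equals $R^*R$ where $R$ is the matrix whose first row is $(a_1,\dots,a_n)$ and whose other rows are zero. Complete positivity says precisely that $\Phi\otimes\mathcal{I}_n$ maps this positive matrix to a positive operator on $\mathcal{H}^n$, so the squared norm is nonnegative. This is where complete positivity, and not mere positivity, is used.

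Next we quotient by the null space $\mathcal{N}=\{x:\langle x,x\rangle_\Phi=0\}$ and complete, which defines $\mathcal{K}$. By Cauchy–Schwarz, $\mathcal{N}$ is a subspace. For $\varphi$ we take left multiplication, $\varphi(a)(b\otimes\xi)=ab\otimes\xi$. The main obstacle is showing that $\varphi(a)$ is well defined and bounded on the quotient. For this we need the inequality $\langle x, \varphi(a^*a)x\rangle_\Phi\le\|a\|^2\langle x,x\rangle_\Phi$. We obtain it from the positivity of $\|a\|^2 1-a^*a=c^*c$ in $A$. Then $[b_i^*(\|a\|^21-a^*a)b_j]=(cB)^*(cB)$ for the row $B=(b_1,\dots,b_n)$, which is again positive in $M_n(A)$, and complete positivity applied to it gives the inequality. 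The inequality shows that $\varphi(a)$ maps $\mathcal{N}$ into $\mathcal{N}$ and extends by continuity to all of $\mathcal{K}$. The $*$-homomorphism properties and unitality then follow directly on elementary tensors: $\varphi(ab)=\varphi(a)\varphi(b)$, $\langle x,\varphi(a)y\rangle=\langle\varphi(a^*)x,y\rangle$, and $\varphi(1)=\mathrm{id}$.

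Finally we define $\obs{V}:\mathcal{H}\to\mathcal{K}$ by $\obs{V}\xi=[1\otimes\xi]$. It is bounded because $\|\obs{V}\xi\|^2=\langle\xi,\Phi(1)\xi\rangle\le\|\Phi(1)\|\,\|\xi\|^2$. We then check that
\[
\left\langle \eta, \obs{V}^*\varphi(a)\obs{V}\xi\right\rangle = \left\langle 1\otimes\eta, a\otimes\xi\right\rangle_\Phi = \left\langle\eta,\Phi(a)\xi\right\rangle
\]
for all $\xi,\eta\in\mathcal{H}$, which gives $\Phi(a)=\obs{V}^*\varphi(a)\obs{V}$. In the setting of this paper, where $\Phi$ is given by Kraus operators as in \eqref{eq:Kraus}, we can alternatively give an explicit $\mathcal{K}=\mathcal{H}\otimes\mathbb{C}^m$ with $\obs{V}\xi=\sum_i \obs{K}_i\xi\otimes e_i$, dualized appropriately to the Heisenberg picture. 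This form makes the connection to the measurement dilation \eqref{eq:measurement-dilation} transparent. The abstract construction above, however, covers a general $\mathcal{C}^*$-algebra $A$.
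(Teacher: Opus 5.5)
Your proposal is correct: it is the standard GNS-type construction on $A\odot\mathcal{H}$, which is Stinespring's original argument; the paper does not prove the theorem but only cites it. The parenthetical reduction to the unitization is a loose end, since it would need the separate lemma that a CP map on a non-unital $A$ extends completely positively to $\tilde A$. This does not matter for the statement as posed, because asking for a unital $\varphi$ already presupposes $A$ unital, and in that case every step you give goes through.
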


So a CP-map can be expressed as a map $\obs{V}^* (\cdot)\obs{V}$ in a ``larger space''. Thus, we can see a quantum channel realized by partial tracing a unitary operator acting on a potentially larger Hilbert space. In other words, a CP-map acts as a partial observation of a unitary operation in a larger space where
\begin{equation}
\label{eq:CP-map-action}
\Phi(\varrho) = tr_\mathcal{K}\{\obs{V}(\varrho\otimes \ket{E}\bra{E}) \obs{V}^\dagger\}
\end{equation}
where $\ket{E}$ is an ancilla state.
So, since $\widetilde{\obs{U}}$ is Hermitian, we can describe it as a non-unitary Kraus operator interacting with the environment in a larger space that follows a unitary evolution.

Note that the unitary equivalence of an operator allows us to reduce any operators with the same eigenvalues to the $\obs{U}_x$ case, thus generalizing this result to any unitarily equivalent operator and its corresponding decomposition. In the case of the operator with binary outcomes $\obs{U}'$, this allows us to dismiss altogether the outcomes from the probability zero output port, where the forbidden regions arising from the splitting ratios will make this a non-zero branch by a small margin. By doing so, at the expense of slower bit-generation speed, we may adjust the ``real'' angles and phase values to mitigate bias in the two main output ports while retaining the indefiniteness of the values.

To model the real distribution of outcomes, we can describe the measurement performed by $\widetilde{\obs{U}}$ as a completely positive map acting on a noisy quantum channel 
\begin{equation}
\xi(\rho)=\sum_k \obs{E}_k \rho \obs{E}_k^\dagger,
\end{equation}
where $\rho$ is our density operator and $\obs{E}_k$ are the Kraus operators describing the effect of errors. To broadly model the possible effect of errors, we set $\obs{E}_k$ to represent the depolarising channel on a single qutrit. That is, with some probability, the qutrit remains intact or undergoes a bit flip and/or phase flip error.

Let
\begin{equation}
\label{eq:matrices-A-H}
\begin{array}{c}
\obs{A} = \begin{pmatrix}
    1 & 0 & 0\\
    0 & \omega^2 & 0 \\
    0 & 0 & \omega
\end{pmatrix},\;
\obs{B} = \begin{pmatrix}
    1 & 0 & 0\\
    0 & \omega & 0 \\
    0 & 0 & \omega^2
\end{pmatrix},\\
\obs{C} = \begin{pmatrix}
    0 & 0 & 1\\
    1 & 0 & 0 \\
    0 & 1 & 0
\end{pmatrix}, \; \obs{D}=\obs{C}^\dagger
, \; \obs{E} = \begin{pmatrix}
    0 & \omega & 0\\
    0 & 0 & \omega^2 \\
    1 & 0 & 0
\end{pmatrix}\\
\obs{F} = \begin{pmatrix}
    0 & 0 & \omega^2\\
    1 & 0 & 0 \\
    0 & \omega & 0
\end{pmatrix}, \; 
\obs{G} = \begin{pmatrix}
    0 & \omega^2 & 0\\
    0 & 0 & \omega \\
    1 & 0 & 0
\end{pmatrix} ,\; 
\obs{H} = \begin{pmatrix}
    0 & 0 & \omega\\
    1 & 0 & 0 \\
    0 & \omega^2 & 0
\end{pmatrix}
\end{array}
\end{equation}

where $\omega$ is the third root of unity. Let $p$ be the expected likelihood of errors, estimated by assessing the efficiency of the physical implementation, for example, by conducting the test described in~\cite{CDMTCS587}.
Then, the depolarising channel is given by the Kraus operators
\begin{equation}
\begin{aligned}
 \obs{E}_0 &=\sqrt{1-p}\obs{I}_3, & \obs{E}_1 &=\sqrt{\frac{p}{8}}\obs{D}, & \obs{E}_2 &=\sqrt{\frac{p}{8}}\obs{B},\\
\obs{E}_3 &=\sqrt{\frac{p}{8}}\obs{C}, & \obs{E}_4 &=\sqrt{\frac{p}{8}}\obs{E}, & \obs{E}_5 &=\sqrt{\frac{p}{8}}\obs{F}, \\
\obs{E}_6 &=\sqrt{\frac{p}{8}}\obs{G}, & \obs{E}_7 &=\sqrt{\frac{p}{8}}\obs{H}, & \obs{E}_8 &=\sqrt{\frac{p}{8}}\obs{A}.
\end{aligned}
\end{equation}

So
\begin{equation}
\xi(\rho) = \sum_k \obs{E}_k\rho \obs{E}_k^\dagger = (1-p)\rho + \frac{p}{8}\left(\obs{A}\rho \obs{A}^\dagger + \obs{B}\rho \obs{B}^\dagger + \ldots + \obs{H}\rho \obs{H}^\dagger \right).
\end{equation}

Then, for a qutrit prepared in state $\ket{1}$ we have 

\begin{equation}
(1-p)\ket{1}\bra{1} + \frac{p}{8}\left( 3\ket{-1}\bra{-1} + 2\ket{1}\bra{1} + 3\ket{0}\bra{0} \right).
\end{equation}

Thus, we get the outcome probabilities: 
\begin{equation}
\label{eq:outcome-prob}
\begin{aligned}
\bra{1_x}\xi(\rho)\ket{1_x} &= \frac{1}{4} + \frac{3p}{32}, \\
\bra{0_x}\xi(\rho)\ket{0_x} &= \frac{1}{2} - \frac{3p}{16}, \\
\bra{-1_x}\xi(\rho)\ket{-1_x} &= \frac{1}{4} + \frac{3p}{32}. \\ 
\end{aligned}
\end{equation}

Thus, the conditions for the localized Kochen-Specker Theorem are still met, and we can estimate the outcome probabilities by setting the expected error rate $p$.
We can apply this same method to the binary case and any valid preparation state. That is, for 

\begin{equation}
\label{eq:measurement-operator-U-prime-b}
\obs{U}'=\begin{pmatrix}
    \frac{1}{\sqrt{2}} & \frac{1}{\sqrt{2}} & 0\\
    \frac{1}{\sqrt{2}} & \frac{1}{\sqrt{2}} & -1 \\
    0 & 0 & 0
\end{pmatrix},
\end{equation}
with preparation states $\ket{1}$ or $\ket{0}$,  $\obs{U}\xi(\rho)\obs{U}^\dagger$ yields an outcome in either of the first two outputs ports with probability $\frac{1}{2}\pm \frac{3p}{16}$.

Note that whenever the preparation states belong to the orthonormal basis states, we have that $\xi(\rho)$ is diagonal. Moreover, the unitary decomposition process nullifies the diagonal elements at each step. Thus, by measuring the detectors' efficiency and applying this analysis component-wise during the unitary decomposition, we may manipulate the bias using additional phase shifters in the outputs, in a similar way to how we may displace the cross and bar states. In this manner, one can reduce and distribute the bias as needed to ensure we remain within the localized Kochen-Specker bounds.

%%%%%%%%%%%%%%%%%%%%%%%%%%%%%%%%%%%%%%
\section{Conclusions}
Despite errors, we can ensure that the behavior of this class of QRNGs corresponds to a Hermitian process, which is a completely positive map acting on a quantum channel. The measurement operator is no longer unitary, but it can be modeled as a non-unitary operator that interacts with the environment in a larger unitary system.

Of course, it is essential to ensure the quality of the QRNG by having experimental ways to verify that the implementation is as accurate and as close to unitarity as possible. The symmetry inverter testing device will verify the unitarity of the QRNG \cite{CDMTCS587}. In combination with the verification of the reproducibility of the theoretically predicted probabilities, and the results that the Kochen-Specker Theorem~\ref{thm:LKS} is robust under unsharpness, the viability of the QRNG can be ensured.

In addition, the preparation stage allows the use of any state that is not parallel nor orthogonal to an eigenstate of the
second measurement operator~\cite{2015-AnalyticKS}. This permits flexibility in choosing operators that preserve the value indefiniteness certification under unsharp measurements. Hence, under certain conditions, the process remains within the scope of the Kochen-Specker Theorem, guaranteeing a high degree of incomputability~\cite{acs-2015-info6040773}.

Moreover, to understand the effect of error margins arising from experimental imperfections~\cite{landsman2017foundations}, we examined the measurement performed by an imperfect unitary as a completely positive map acting on a noisy quantum channel. In this manner, we were able to model the effects of the depolarising channel acting on a single qubit. This allows us to develop a partial model of the impact that different sources of errors (e.g., dark counts from the single
photon detectors) have on the distribution of outputs.

\section*{Acknowledgment}
We thank Professors A.~Cabello, S.~Ma and M.~Zimand for their comments, which improved the paper.

% Uncomment and complete if required by the target journal.
% \section*{Declarations}
% \begin{itemize}
% \item Funding: Not applicable.
% \item Conflict of interest/Competing interests: Not applicable.
% \item Ethics approval and consent to participate: Not applicable.
% \item Consent for publication: Not applicable.
% \item Data availability: Not applicable.
% \item Materials availability: Not applicable.
% \item Code availability: Not applicable.
% \item Author contribution: These authors contributed equally to this work.
% \end{itemize}

%% BioMed_Central_Bib_Style_v1.01

%\bibliographystyle{abbrv}
%\bibliography{manu,cris_s,cristi}

\begin{thebibliography}{10}

\bibitem{abbott2012strongrandomness}
A.~A. Abbott, C.~S. Calude, J.~Conder, and K.~Svozil.
\newblock {Strong Kochen-Specker theorem and incomputability of quantum
  randomness}.
\newblock {\em Physical Review A}, 86(062109), Dec 2012.

\bibitem{vi-aeverywhere-2014}
A.~A. Abbott, C.~S. Calude, and K.~Svozil.
\newblock Value indefiniteness is almost everywhere.
\newblock {\em Physical Review A}, 89(3):032109--032116, 2014.

\bibitem{acs-2015-info6040773}
A.~A. Abbott, C.~S. Calude, and K.~Svozil.
\newblock A non-probabilistic model of relativised predictability in physics.
\newblock {\em Information}, 6(4):773--789, 2015.

\bibitem{2015-AnalyticKS}
A.~A. Abbott, C.~S. Calude, and K.~Svozil.
\newblock A variant of the {K}ochen-{S}pecker theorem localising value
  indefiniteness.
\newblock {\em Journal of Mathematical Physics}, 56:102201, Oct 2015.

\bibitem{aguero_trejo_new_2021}
J.~M. Ag{\"u}ero~Trejo and C.~S. Calude.
\newblock A new quantum random number generator certified by value
  indefiniteness.
\newblock {\em Theoretical Computer Science}, 862:3--13, Mar. 2021.

\bibitem{RSPA23}
J.~M. Ag{\"u}ero~Trejo and C.~S. Calude.
\newblock Photonic ternary quantum random number generators.
\newblock {\em Proc. R. Soc. A}, 479:1--16, 2023.

\bibitem{CDMTCS583}
J.~M. {Ag\"{u}ero Trejo} and C.~S. Calude.
\newblock An {$N$}--dimensional quantum random number generator.
\newblock Report {CDMTCS}-583, Centre for Discrete Mathematics and Theoretical
  Computer Science, University of Auckland, Auckland, New Zealand, May 2025.

\bibitem{trejo_photonic_2023}
J.~M. Agüero~Trejo and C.~S. Calude.
\newblock Photonic ternary quantum random number generators.
\newblock {\em Proceedings of the Royal Society A: Mathematical, Physical and
  Engineering Sciences}, 479(2273):20220543, May 2023.

\bibitem{AGUEROTREJO2024114632}
J.~M. {Agüero Trejo}, C.~S. Calude, M.~J. Dinneen, A.~Fedorov, A.~Kulikov,
  R.~Navarathna, and K.~Svozil.
\newblock How real is incomputability in physics?
\newblock {\em Theoretical Computer Science}, 1003:114632, 2024.

\bibitem{CDMTCS587}
J.~M. {Agüero Trejo}, C.~S. Calude, and O.~C. Stoica.
\newblock Testing the 3{D} {QRNG} by undoing.
\newblock Report {CDMTCS}-587, Centre for Discrete Mathematics and Theoretical
  Computer Science, University of Auckland, Auckland, New Zealand, Dec. 2025.

\bibitem{ArakiYanase1960MeasurementofQMOperators}
H.~Araki and M.~Yanase.
\newblock Measurement of quantum mechanical operators.
\newblock {\em Phys. Rev.}, 120(2):622, 1960.

\bibitem{Beneduci2016POVMandFellerMarkovKernels_commutative_POVM_IS_smeared_PVM}
R.~Beneduci.
\newblock Positive operator valued measures and feller markov kernels.
\newblock {\em J. Math. Anal. Appl.}, 442(1):50--71, 2016.

\bibitem{Breuer2002KochenSpeckerTheoremForFinitePrecisionSpinOneMeasurements}
T.~Breuer.
\newblock {K}ochen-{S}pecker theorem for finite precision spin-one
  measurements.
\newblock {\em Physical review letters}, 88(24):240402, 2002.

\bibitem{Breuer2002KochenSpeckerTheoremForUnsharpSpin1Observables}
T.~Breuer.
\newblock A {K}ochen-{S}pecker theorem for unsharp spin 1 observables.
\newblock In {\em Non-locality and Modality}, pages 195--203. Springer,
  Heidelberg, 2002.

\bibitem{Breuer2003AnotherNoGoTheoremForHiddenVariableModelsOfInaccurateSpin1Measurements}
T.~Breuer.
\newblock Another no-go theorem for hidden variable models of inaccurate spin 1
  measurements.
\newblock {\em Philosophy of Science}, 70(5):1368--1379, 2003.

\bibitem{Busch2009NoInformationWithoutDisturbance}
P.~Busch.
\newblock ``{N}o information without disturbance'': {Q}uantum limitations of
  measurement.
\newblock In W.~Myrvold, J.~Christian, and P.~Pearle, editors, {\em Quantum
  Reality, Relativistic Causality, and Closing the Epistemic Circle}, volume~73
  of {\em Western Ontario Series in Philosophy of Science}, pages 229--256,
  Heidelberg, 2009. Springer.

\bibitem{Wigner1952MessungQMOperatorenPBusch2010EnTranslation}
P.~Busch.
\newblock {Translation of ``Die Messung quantenmechanischer Operatoren'' by EP
  Wigner}.
\newblock {\em Arxiv preprint quant-ph/1012.4372}, pages 1--10, December 2010.
\newblock \href{http://arxiv.org/abs/1012.4372}{arXiv:quant-ph/1012.4372}.

\bibitem{BuschLahtiPellonpaaYlinen2016QuantumMeasurement}
P.~Busch, P.~Lahti, J.-P. Pellonp{\"a}{\"a}, and K.~Ylinen.
\newblock {\em Quantum measurement}, volume~23.
\newblock Springer, Switzerland, 2016.

\bibitem{cabello2002FinitePrecisionMeasurementDoesNotNullifyTheKochenSpeckerTheorem}
A.~Cabello.
\newblock Finite-precision measurement does not nullify the {K}ochen-{S}pecker
  theorem.
\newblock {\em Phys. Rev. A}, 65(5):052101, 2002.

\bibitem{calude_binary_2024}
C.~S. Calude and K.~Svozil.
\newblock Binary quantum random number generator based on value indefinite
  observables.
\newblock {\em Scientific Reports}, 14(1):12845, June 2024.

\bibitem{Cilluffo2024ClementsUnitaryDecompositionInPractice}
D.~Cilluffo.
\newblock Commentary on the decomposition of universal multiport
  interferometers: how it works in practice.
\newblock {\em Arxiv preprint quant-ph/2412.11955}, pages 1--13, 2024.
\newblock \href{http://arxiv.org/abs/2412.11955}{arXiv:quant-ph/2412.11955}.

\bibitem{ClementsEtAl2016OptimalDesignForUniversalMultiportInterferometers}
W.~R. Clements, P.~C. Humphreys, B.~J. Metcalf, W.~S. Kolthammer, and I.~A.
  Walmsley.
\newblock Optimal design for universal multiport interferometers.
\newblock {\em Optica}, 3(12):1460--1465, 2016.

\bibitem{dEspagnat2018VeiledRealityAnAnalysisOfPresentDayQuantumMechanicalConcepts}
B.~d'Espagnat.
\newblock {\em Veiled reality: {A}n analysis of present-day quantum mechanical
  concepts}.
\newblock CRC Press, Boca Raton, FL, USA, 2018.

\bibitem{dEspagnat1976ConceptualFoundationsOfQuantumMechanics}
B.~d’Espagnat.
\newblock {\em Conceptual Foundations of {Q}uantum {M}echanics}.
\newblock W.A. Benjamin, New York, 1976.

\bibitem{dEspagnat2006OnPhysicsAndPhilosophy}
B.~d’Espagnat.
\newblock {\em On physics and philosophy}.
\newblock Princeton University Press, Princeton, NJ, 2006.

\bibitem{GrudkaKurzynski2008IsThereContextualityForASingleQubit}
A.~Grudka and P.~Kurzy{\'n}ski.
\newblock Is there contextuality for a single qubit?
\newblock {\em Phys. Rev. Lett.}, 100(16):160401, 2008.

\bibitem{Heisenberg1958-HEIPAP}
W.~Heisenberg.
\newblock {\em Physics and Philosophy: {T}he Revolution in Modern Science}.
\newblock Harper, New York, 1958.

\bibitem{Holevo2011ProbabilisticAndStatisticalAspectsOfQuantumTheory}
A.~Holevo.
\newblock {\em Probabilistic and statistical aspects of quantum theory},
  volume~1 of {\em PSNS}.
\newblock Springer Basel, Basel, Switzerland, 2011.

\bibitem{Quantis2020Q}
{ID Quantique SA}.
\newblock {\em Random Number Generation -- White Paper. What is the Q in QRNG?}
\newblock idQuantique, Geneva, Switzerland, May 2020.

\bibitem{PhysRevLett.119.240501}
A.~Kulikov, M.~Jerger, A.~Poto{\v{c}}nik, A.~Wallraff, and A.~Fedorov.
\newblock Realization of a quantum random generator certified with the
  {Kochen-Specker } theorem.
\newblock {\em Phys. Rev. Lett.}, 119:240501, Dec 2017.

\bibitem{Kunjwal2019BeyondTheCabelloSeveriniWinterFrameworkMakingSenseOfContextualityWithoutSharpnessOfMeasurements}
R.~Kunjwal.
\newblock Beyond the {C}abello-{S}everini-{W}inter framework: {M}aking sense of
  contextuality without sharpness of measurements.
\newblock {\em Quantum}, 3:184, 2019.

\bibitem{landsman2017foundations}
K.~Landsman.
\newblock {\em Foundations of Quantum Theory: From Classical Concepts to
  Operator Algebras}, volume 188 of {\em Fundamental Theories of Physics}.
\newblock Springer, Cham, Switzerland, 2017.

\bibitem{Loveridge2020ARelationalPerspectiveOnTheWignerArakiYanaseTheorem}
L.~Loveridge.
\newblock A relational perspective on the {W}igner-{A}raki-{Y}anase theorem.
\newblock {\em Journal of Physics: Conference Series}, 1638(1):012009, 2020.

\bibitem{LoveridgeBusch2011MeasurementofQMOperatorsRevisited}
L.~Loveridge and P.~Busch.
\newblock ``{M}easurement of quantum mechanical operators'' revisited.
\newblock {\em Eur. Phys. J. D}, 62(2):297--307, 2011.

\bibitem{mannalath_comprehensive_nodate}
V.~Mannalath, S.~Mishra, and A.~Pathak.
\newblock A comprehensive review of quantum random number generators: concepts,
  classification and the origin of randomness.
\newblock {\em Quantum Information Processing}, 22:439, 2023.

\bibitem{Naimark1940AboutSecondKindSelfAdjointExtensionsOfSymmetricalOperator}
M.~Naimark.
\newblock About second-kind self-adjoint extensions of symmetrical operator.
\newblock {\em Izv. Akad. Nauk USSR, Ser. Mat}, 4:53--104, 1940.

\bibitem{Ozawa1984QuantumMeasuringProcessesOfContinuousObservables}
M.~Ozawa.
\newblock Quantum measuring processes of continuous observables.
\newblock {\em J. Math. Phys.}, 25(1):79--87, 1984.

\bibitem{Spekkens2005ContextualityForPreparationsTransformationsAndUnsharpMeasurements}
R.~Spekkens.
\newblock Contextuality for preparations, transformations, and unsharp
  measurements.
\newblock {\em Phys. Rev. A}, 71(5):052108, 2005.

\bibitem{Stinespring1955PositiveFunctionsOnC-star-algebras}
W.~Stinespring.
\newblock Positive functions on {C*}-algebras.
\newblock {\em Proc. Amer. Math. Soc.}, 6(2):211--216, 1955.

\bibitem{Stoica2025CanWeAccuratelyReadWriteQuantumData}
O.~Stoica.
\newblock Can we accurately read or write quantum data?
\newblock {\em Physica Scripta}, 100(8):085108, 2025.

\bibitem{wigner1952MessungQMOperatoren}
E.~Wigner.
\newblock Die {M}essung quantenmechanischer {O}peratoren.
\newblock {\em {Zeitschrift f{\"u}r Physik A Hadrons and nuclei}},
  133(1):101--108, 1952.

\end{thebibliography}

\end{document}